\newtheorem{theorem}{Theorem}[section]
\newtheorem{lemma}[theorem]{Lemma}
\newtheorem{observation}[theorem]{Observation}
\newtheorem{corollary}[theorem]{Corollary}
\newtheorem{conjecture}[theorem]{Conjecture}
\theoremstyle{definition}
\newtheorem{definition}[theorem]{Definition}
\newtheorem{remark}[theorem]{Remark}
\title{CNFs and DNFs with Exactly $k$ Solutions}
\author{%
  L.\,Sunil Chandran\thanks{Indian Institute of Science, Bengaluru, India.
  \href{mailto:sunil@iisc.ac.in}{sunil@iisc.ac.in}.}
  \and
  Rishikesh Gajjala\thanks{Indian Institute of Science, Bengaluru, India.
  \href{mailto:rishikeshg@iisc.ac.in}{rishikeshg@iisc.ac.in}.}
  \and
  Kuldeep S.\ Meel\thanks{Georgia Institute of Technology, USA.
  \href{mailto:meel@gatech.edu}{meel@gatech.edu}.}
}
\date{ }
\begin{document}
\maketitle

\begin{abstract}
	Model counting is a fundamental problem that consists of determining the 
	number of satisfying assignments for a given Boolean formula. The weighted variant, which computes the weighted sum of satisfying 
	assignments, has extensive applications in probabilistic reasoning, 
	network reliability, statistical physics, and formal verification. A 
	common approach for solving weighted model counting is to reduce it to 
	unweighted model counting, which raises an important question: {\em What is 
	the minimum number of terms (or clauses) required to construct a DNF 
	(or CNF) formula with exactly $k$ satisfying assignments?}
	
	In this paper, we establish both upper and lower bounds on this 
	question. We prove that for any natural number $k$, one can construct a 
	monotone DNF formula with exactly $k$ satisfying assignments using at 
	most $O(\sqrt{\log k}\log\log k)$ terms. This construction represents 
	the first $o(\log k)$ upper bound for this problem. We complement this 
	result by showing that there exist infinitely many values of $k$ for 
	which any DNF or CNF representation requires at least $\Omega(\log\log 
	k)$ terms or clauses. These results have significant implications for 
	the efficiency of model counting algorithms based on formula 
	transformations.
\end{abstract}
\section{Introduction}
\label{sec:introduction}

The (unweighted) model counting is a classical problem in which one has 
to find the number of satisfying assignments for a given boolean 
formula. Usually, Boolean formulae are considered in two forms---Conjunctive Normal Form (CNF) and Disjunctive Normal Form (DNF). In 
the former case, the formula is written as conjunctions (ANDs) of 
clauses. (Clauses are literals combined using ORs.) In the latter case 
the formula is written as disjunctions (ORs) of terms. (Terms are 
literals combined using ANDs). Note that a variable in a formula may 
appear either in positive form or in negated form. A monotone formula 
in DNF (or in CNF) consists of only variables in positive form.

The unweighted model counting problem was shown to be \#P-complete for 
formulae in both CNF and DNF by Valiant \cite{Valiant79}. The weighted 
model counting is a generalization of this problem. We are not 
providing a complete definition of this problem here since it is 
technical, but it can be found in \cite{Meel15}. Weighted model 
counting has been extensively studied \cite{ DBLP:journals/jair/DarwicheM02, DBLP:journals/siamcomp/FlumG04} due to its diverse applications 
across multiple domains. These applications include probabilistic 
reasoning \cite{Roth93}, network reliability estimation 
\cite{SangBBKP04}, statistical physics, probabilistic databases 
\cite{olteanu_probabilistic_2011}, program synthesis, and system verification 
\cite{DomshlakH07, XueCD12}. The fundamental nature of weighted model 
counting has led to its emergence as a core computational problem in 
areas requiring reasoning under uncertainty, where the ability to 
compute weighted sums across large combinatorial spaces is essential.

A natural approach for solving the weighted model counting problem is 
to reduce it to the unweighted model counting problem and then to use 
the existing solvers for the latter \cite{Meel15}. This reduction 
approach has proven effective across various application domains as it 
leverages the significant advances in unweighted model counting 
algorithms. A key subroutine in this reduction involves finding DNFs or 
CNFs with exactly $k$ satisfying assignments for a given positive 
integer $k$. The reduction becomes more efficient as the number of 
terms (or clauses) in the DNF or CNF decreases. {While the number of terms is not the only factor of relevance, it raises a very natural question: {\em What is 
	the minimum number of terms (or clauses) required to construct a DNF 
	(or CNF) formula with exactly $k$ satisfying assignments?}}

This can be quantified using $\beta(k)$, which is defined as follows:

\begin{definition}
	The minimum number of terms or clauses needed to generate a DNF or CNF 
	with exactly $k$ satisfying assignments is defined to be $\beta(k)$.
\end{definition}

{It is known that $\beta(k) = O(\log{k})$~\cite{Meel15}}. In this work, using an interesting connection to ideals of set 
systems, we give new lower and upper bounds on $\beta(k)$.

It is easy to see that many arbitrarily large numbers $k$ exist for 
which $\beta(k)=1$. For example, consider the DNF in which $(x_1)$ is 
the only term and $x_2,x_3\cdots x_{q+1}$ are the free variables, i.e., 
the variables which do not appear in the DNF (and hence their truth 
value does not affect the satisfiability of the formulae). Therefore, 
all $k$ of the form $2^q$ can be generated using a DNF with only $1$ 
term, i.e., $\beta(k)=1$. At the same time, there exists $k$ for which 
we need at least $\Omega(\log\log{k})$ terms to generate a DNF with 
exactly $k$ solutions (we prove this statement later). Thus, $\beta(k)$ 
does not increase or decrease monotonically with $k$. This motivates us 
to introduce a parameter called \textit{block count} of $k$, which is 
more intimately associated with the number of terms needed to generate 
a DNF with exactly a given number of satisfying assignments.

Let $\mathbf{1}_m$ denote $\mathbf{1}\mathbf{1}\cdots m \text{ times}$ 
and $\mathbf{0}_q$ denote $\mathbf{0}\mathbf{0}\cdots q \text{ times}$. 
Using this notation, the binary representation of $49$, namely $110001$ 
can be represented as $\mathbf{1}_2\mathbf{0}_3\mathbf{1}_1$.

\begin{definition}
	The block binary representation of any $k\in \mathbb{N}$, is defined to 
	be the unique representation $\mathbf{1}_{q_b} \mathbf{0}_{l_b}\cdots 
	\mathbf{1}_{q_2}\mathbf{0}_{l_2} \mathbf{1}_{q_1} \mathbf{0}_{l_1}$ 
	where $q_i> 0$ and $l_j> 0$ for all $i\in [b]$ and $j \in [2,b]$. Note 
	that $l_1$ can be $0$. For $k$ with such a representation, its block 
	count, $bl(k)=b$.
\end{definition}

The main result of our paper establishes a relationship between the 
block count of a number and the minimum number of terms needed to 
construct a DNF with exactly that many satisfying assignments.

\begin{theorem}[Main Result]\label{thm:sqrt_bnd}
	For every $k\geq 3$,  $$\log{(bl(k)+1)} \leq  \beta(k) \leq \min\left\{ 20\sqrt{\log{k}} \log\log{k},\ {bl}(k) + 1 \right\}$$
\end{theorem}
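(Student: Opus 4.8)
The plan is to handle the three inequalities separately, since they call for quite different techniques.

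\emph{Lower bound $\log(bl(k)+1)\le\beta(k)$.} Any DNF with $m$ terms (or CNF with $m$ clauses) and exactly $k$ solutions over $n$ variables has a solution set that is a union of $m$ subcubes (respectively, the complement of such a union), and an intersection of subcubes is again a subcube or empty, so inclusion--exclusion expresses $k$ as a signed sum of at most $2^m-1$ powers of two. I would sharpen this by \emph{pairing} the inclusion--exclusion terms according to whether they involve the first term/clause: each pair collapses to a difference $2^a-2^b$ of two nested subcube sizes (a ``single block'' number), leaving only $2^{m-1}$ such summands. It then suffices to prove the number-theoretic lemma: if $v\ge 0$ is a signed sum of $t$ numbers each of the form $2^a-2^b$ with $a>b$ (allowing the degenerate cases $2^a$ and $0$), then $bl(v)\le t$. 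I would prove this by induction, showing that adding or subtracting one such block changes the block count of a nonnegative integer by at most $1$ (analysing the carry chain it triggers), and ordering the summands so that all partial sums stay nonnegative. Then $bl(k)\le 2^{m-1}$, so $bl(k)+1\le 2^{m}$, which gives $\beta(k)\ge\log(bl(k)+1)$.

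\emph{Upper bound $\beta(k)\le bl(k)+1$.} Write $k=\sum_{i=1}^{b}(2^{H_i}-2^{L_i})$ from its block binary representation, where $b=bl(k)$ and $H_b>L_b\ge H_{b-1}>\cdots\ge H_1>L_1\ge 0$. I would give an explicit monotone DNF by induction on $b$, peeling off the top block: starting from a monotone DNF for $k'=k\bmod 2^{L_b}$, which has block count $b-1$, I adjoin fresh variables and new monotone terms whose filters, together with a re-embedding of the solution set for $k'$, have union of size exactly $(2^{H_b}-2^{L_b})+k'$. The one subtlety is that a single block of $q\ge 2$ ones costs two terms in isolation (the base case being the two-term formula $(x_1\wedge\cdots\wedge x_{q-1})\vee(x_q\wedge\cdots\wedge x_{2q-2})$, which has exactly $2^{q}-1$ solutions), so the induction must be arranged so that the ``second'' term of each block doubles as a term already present for $k'$; this is achieved by making the variable sets of consecutive blocks overlap in a carefully chosen common core, so that the inclusion--exclusion cross terms telescope to the desired value.

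\emph{Upper bound $\beta(k)\le 20\sqrt{\log k}\,\log\log k$.} This is the heart of the paper and the step I expect to be the main obstacle. The engine is an iterated construction: if $\phi$ is a monotone DNF on a variable set of size $N$ with $a$ satisfying assignments, then $\phi\vee T$, where $T$ is a fresh monotone term of size $r$ over $w$ new variables, has $2^{w-r}\bigl(2^{N}+a(2^{r}-1)\bigr)$ satisfying assignments and only one additional term. Iterating this (and absorbing the free powers of two), the numbers reachable with $m$ terms are, up to scaling, the nested expressions $2^{N_{m-1}}+(2^{N_{m-2}}+(\cdots)(2^{r_m}-1))(2^{r_{m-1}}-1)\cdots$, i.e.\ signed combinations of products of Mersenne-type factors $2^{r}-1$ with freely chosen exponents and offsets. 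The task is to show that every $k$ admits such a representation with $O(\sqrt{\log k}\,\log\log k)$ factors; I would do this by a recursion on the binary length of $k$: at each un-step one strips off one Mersenne factor and one offset $2^{N}$ chosen so that the quotient is an integer with roughly $\sqrt{\log k}$ fewer bits, which costs $O(\log\log k)$ terms, and one iterates about $\sqrt{\log k}$ times before finishing the small residue with the $bl(\cdot)+1$ construction. The delicate point---and, I believe, the reason for the $\sqrt{\log k}$ rather than a $\log\log k$ loss---is to guarantee that the divisibility and positivity constraints at every level can be met simultaneously; this is where the genuine number theory (orders of $2$ modulo Mersenne numbers, base-$2^{r}$ digit sums, and a pigeonhole over the free offsets) enters.
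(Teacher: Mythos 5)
There is a genuine gap, concentrated in the third part. Your plan for $\beta(k)\le 20\sqrt{\log k}\log\log k$ --- representing $k$ by nested products of Mersenne-type factors $2^r-1$ and stripping one factor per recursion step --- is not the paper's route, and more importantly it is not carried out: the entire theorem rests on the existence claim that at every level one can choose an offset $2^N$ and a factor $2^r-1$ so that the quotient is an integer with about $\sqrt{\log k}$ fewer bits, and you explicitly leave the divisibility and positivity constraints unresolved. That existence claim is the whole difficulty, and nothing in your sketch suggests how to meet it. The paper avoids this entirely: it writes $k=2^{3q^2}+\gamma\cdot 2^{q^2}+\beta$ with $q\approx\sqrt{\log k/3}$ and $\beta<2^{q^2}$, recurses only on $\gamma$ (whose bit-length is a constant factor smaller), and handles $2^{3q^2}+\beta$ by a purely combinatorial construction in the ideal/set-system language: the $q^2$ bits of $\beta$ are arranged in a $q\times q$ grid, and two families $\mathcal{S},\mathcal{T}$ of $q$ sets each are built (via auxiliary sets $F_{ij}$) so that $|\mathbf{ID}(\mathcal{S}\cup\mathcal{T})|$ equals $\beta$ up to correction terms whose block count is $O(q\log q)$; those corrections are then absorbed using the splitting/lifting lemmas and the $bl(\cdot)+1$ bound. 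No number theory is needed. Without a substitute for the grid idea (or a proof of your divisibility claim), the main inequality is unproved.

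Two smaller points. First, in the lower bound your ``sharpened'' pairing lemma is false: adding a single-block number $2^a-2^b$ to a nonnegative integer can increase the block count by $2$, e.g.\ $255+(2^4-2^2)=267=100001011_2$ goes from $1$ block to $3$. This is harmless, since the unpaired inclusion--exclusion (the paper's route) already expresses $k$ as a signed sum of at most $2^m-1$ powers of two, and the (true) lemma that each signed power of two changes the block count by at most $1$ gives $bl(k)\le 2^m-1$, hence $\log(bl(k)+1)\le m$; you should just drop the pairing. Second, your $\beta(k)\le bl(k)+1$ argument is essentially the paper's (induction on blocks, one new term per block plus a constant, base case a two-term formula for a one-block number), though the paper peels off the \emph{lowest} block and makes the overlap bookkeeping painless via the splitting and lifting lemmas in the ideal formulation, whereas your top-down ``telescoping cross terms'' version is left at the level of a plausible sketch.
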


This represents the first $o(\log{k})$ construction for this problem. 
We also conjecture that the value of $\beta(k)$ is polynomial in 
$\log{(bl(k)+1)}$.

\begin{conjecture}
	There exists a sufficiently large constant $C$ and a function $f(x)$, 
	which is polynomial in $x$ such that for every $k\in \mathbb{N}$, 
	$\beta(k) \leq C \cdot f(\log{(bl(k))})$.
\end{conjecture}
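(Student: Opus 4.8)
The plan is to attack the conjecture by strengthening the constructive side of \Cref{thm:sqrt_bnd}, replacing the bound $bl(k)+1$ by something that behaves polylogarithmically in $bl(k)$. The starting point is the observation that the bound $\beta(k)\le bl(k)+1$ presumably comes from a ``one term per block'' construction: each maximal run of $\mathbf{1}$'s in the block binary representation contributes one term of a monotone DNF. To improve this, I would think of the block structure as itself a combinatorial object to be encoded recursively. Writing $k$ in block binary as $\mathbf{1}_{q_b}\mathbf{0}_{l_b}\cdots\mathbf{1}_{q_1}\mathbf{0}_{l_1}$, the key structural fact I would try to exploit is that the positions of the blocks are governed by the partial sums $s_i=\sum_{j\le i}(q_j+l_j)$, and the value $k$ can be written as an alternating sum $\sum_i \pm 2^{s_i}$ of at most $2b$ powers of two (telescoping $\mathbf{1}_{q}\mathbf{0}_{l}$ as $2^{q+l}-2^{l}$). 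So the real question becomes: given that $k$ is a $\pm$-combination of $O(b)$ powers of two, how few DNF terms are needed? If one could show that a $\pm$-combination of $t$ powers of two requires only $\mathrm{poly}(\log t)$ terms, one would be done, since $t=O(b)=O(bl(k))$.

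The second step is the reduction mechanism. A monotone DNF term $T$ on a set $S$ of variables, inside an ambient set of $n$ variables, contributes exactly $2^{n-|S|}$ satisfying assignments if it is the only term; a disjoint union of terms $T_1,\dots,T_m$ on pairwise-incomparable and suitably ``spread out'' variable sets can be made to have exactly $\sum_i 2^{n-|S_i|}$ satisfying assignments when the terms' satisfying-assignment sets are disjoint, and inclusion–exclusion can in principle realize subtractions. The crux is a \emph{packing lemma}: I would aim to prove that for any target $T$ that is a signed sum of $t$ powers of two, one can realize $T$ with a monotone DNF (or CNF) on a cleverly chosen set of variables using $g(t)$ terms, where $g(t)=\mathrm{poly}(\log t)$. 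The natural approach is divide and conquer on the list of exponents: split the $t$ powers into two halves whose \emph{spans} of exponents are separated by a large gap, recursively build gadgets $D_{\mathrm{low}}$ and $D_{\mathrm{high}}$ for the two halves on disjoint blocks of fresh variables, then combine them. If $T=T_{\mathrm{high}}\cdot 2^{\Delta}+T_{\mathrm{low}}$ with $0\le T_{\mathrm{low}}<2^{\Delta}$, the combination can be done by placing $D_{\mathrm{low}}$ and $D_{\mathrm{high}}$ on variable blocks whose free-variable counts differ by $\Delta$, at a cost that is \emph{additive}, not multiplicative, in the two subproblem sizes — which is exactly what is needed to get $g(t)=O(g(t/2)+\mathrm{polylog}(t))$ and hence $g(t)=\mathrm{polylog}(t)$. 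Handling the $-$ signs (the $\mathbf{0}$-blocks) is where I would be most careful: a negative term $-2^{l_i}$ should be absorbed into the recursion by rewriting $\mathbf{1}_q\mathbf{0}_l$ globally rather than term-by-term, so that the final expression is a genuine sum (no cancellation needed across recursion levels), or else by a small inclusion–exclusion correction whose size is also controlled by a recursion of the same shape.

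\textbf{Main obstacle.} The hard part will be the combination step when the two halves are \emph{not} well separated — i.e., when the exponents $s_1<\dots<s_t$ are clustered so that no single cut yields a large gap $\Delta$. In that regime one cannot cleanly multiply a sub-gadget by a power of two via free variables, and naive combination forces a multiplicative blow-up $g(t)\le g(t_1)\cdot g(t_2)$, which only reproduces the trivial $O(t)=O(bl(k))$ bound. I expect the resolution to require a more global ``base-$\log t$'' or chunking argument: group the $t$ exponents into $O(t/\log t)$ consecutive windows, build a gadget for each window using $O(\log t)$ terms (here one can afford the trivial per-window construction), and then combine the $O(t/\log t)$ window-gadgets by a second, coarser application of the separated divide-and-conquer — which only works if the windows can themselves be separated, looping the difficulty. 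Breaking this loop — essentially proving a ``polylog additive complexity'' statement for arbitrary signed sums of powers of two, uniformly over all gap patterns — is the substantive mathematical content, and it is plausibly why the statement is posed as a conjecture rather than a theorem. A secondary obstacle is verifying that throughout the recursion the satisfying-assignment sets of the assembled terms remain genuinely disjoint (or that the inclusion–exclusion bookkeeping closes), since an unintended overlap changes the count and could force extra correction terms at every level.
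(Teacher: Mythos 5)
This statement is posed in the paper as a conjecture; the paper contains no proof of it, and your text does not supply one either. What you have written is a research plan whose pivotal step --- the ``packing lemma'' asserting that any signed sum of $t$ powers of two can be realized by a monotone DNF with $\mathrm{poly}(\log t)$ terms, uniformly over all gap patterns of the exponents --- is essentially a restatement of the conjecture itself, and you concede in your ``main obstacle'' paragraph that you cannot prove it in the clustered-exponent regime. So the gap is not a technical oversight in an otherwise complete argument; it is the entire mathematical content. The parts of your plan that do go through are exactly the tools the paper already has: writing each block $\mathbf{1}_{q}\mathbf{0}_{l}$ as $2^{q+l}-2^{l}$ gives $k$ as a signed sum of $O(bl(k))$ powers of two, and the well-separated combination step (multiply a sub-gadget by $2^{\Delta}$ at no cost, then add with an additive penalty) is precisely the lifting and splitting machinery (\Cref{lift}, \Cref{split}, and the pattern of \Cref{break_induction}). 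Those yield the known bounds $\beta(k)\leq bl(k)+1$ and $\beta(k)=O(\sqrt{\log k}\log\log k)$ of \Cref{thm:sqrt_bnd}, but they do not get below linear in $bl(k)$ when the exponents have no exploitable gaps, which is exactly where your recursion degenerates to a multiplicative (or linear) cost.

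Two further cautions if you pursue this line. First, handling the negative terms is not a bookkeeping detail: the splitting and lifting lemmas only build up ideals additively and by doubling, and the only mechanism in the paper for absorbing subtractions is the block-count argument (\Cref{obs:num_blocks} together with \Cref{simple_upper_bound}), which reintroduces a cost linear in the number of signed summands --- so ``absorbing the minus signs into the recursion'' needs a genuinely new idea, as the paper's own technical core (the $F_{ij}$ construction in Section~\ref{sec:upper}) illustrates by how much work it takes to engineer cancellations with few generators. Second, your proposed chunking fallback (windows of $O(\log t)$ exponents combined by a coarser recursion) circles back to the same separation requirement, as you note, so as it stands the proposal does not improve on the paper's bounds and the conjecture remains open.
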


\paragraph*{Organization:} The rest of the paper is organized as follows. In Section~\ref{sec:ideals}, we 
establish a connection between the problem of finding minimum-sized 
DNFs with exactly $k$ satisfying assignments and the theory of ideals 
of set systems. Section~\ref{sec:lower} provides the proof of our lower bound, while  Section ~\ref{sec:upper}, which forms the technical core of 
the paper, presents the proof of upper bound. We finally conclude in Section~\ref{sec:conclusion}.

\section{Connection to Ideals of Sets System}\label{sec:ideals}
We now show that a natural problem on the ideals of sets 
system (which is also of independent interest) is equivalent 
to finding small monotone DNFs (formulas consisting of 
only variables in positive form) with exactly a given number 
of satisfying assignments. We use this formulation to 
derive our upper bounds for $\beta(k)$.

\subsection{Notation}
$\mathbb{N}$ denotes the set of natural numbers. We use 
$\log{k}$ to denote $\log_{2}{k}$. For a set $S$, $|{S}|$ 
and $2^{{S}}$ denote its cardinality and power set, 
respectively. The union of $A$ and $B$ is denoted as $A\cup B$. The union of two disjoint sets $A,B$ is denoted as $A \sqcup B$.  The notation $[a,b]$ represents 
$\{a,a+1\cdots b\}$ and $[b]$ represents $[1,b]$. 
For every number $i$, we create distinct copies 
$i_0,i_1,i_2\cdots$. The set $[w]_i$ represents 
$\{1_i,2_i\cdots w_i\}$. Note that the sets 
$[w],[w]_0,[w]_1\cdots$ are all different from each 
other as these sets are pairwise disjoint. Given a family of 
sets $\mathcal{S}=\{S_1,S_2\cdots S_t\}$ and a set $X$, we 
define $\mathcal{S}+X= \{S_1\cup X,S_2\cup X\cdots S_t\cup X\}$.

An anti-chain is a subset $\mathcal{A}$ of a partially 
ordered set $P$ such that any two distinct elements of 
$\mathcal{A}$ are incomparable. An (order) ideal (also 
called semi-ideal, down-set, or monotone decreasing subset) 
of $P$ is a subset $I$ of $P$ such that if $t \in I$ and 
$s \leq t$, then $s \in I$. Similarly, a dual order ideal 
(also called up-set or monotone increasing subset) is a 
subset $I$ of $P$ such that if $t \in I$ and $s \geq t$, 
then $s \in I$ \cite{stanley}. When $P$ is finite, there is 
a one-to-one correspondence between anti-chains of $P$ and 
order ideals: the anti-chain $\mathcal{A}$ associated with 
the order ideal $I$ is the set of maximal elements of $I$, 
while $I = \{\, s \in P \mid s \leq t \text{ for some } 
t \in \mathcal{A} \,\}$. Then the anti-chain $\mathcal{A}$ 
is said to generate the ideal $I=\mathbf{ID}(\mathcal{A})$.

\begin{remark}
	There may be some difference of opinion with the definition 
	of ideal given above since, in some contexts, a slightly 
	different definition is used for ideals. However, in this 
	paper, we only study set systems, and with respect to set 
	systems, most authors use the above definition for ideals. 
	For example, see Bollab\'as \cite{bollobasbook}.
\end{remark}

\subsection{Problem Definition}
\begin{definition}
	The ideal generated by a family of sets, 
	$\mathcal{S}=\{S_1, S_2\cdots S_\alpha \}$, is 
	$\mathbf{ID}(\mathcal{S}) = 2^{S_1} \bigcup 2^{S_2} 
	\bigcup \cdots 2^{S_\alpha}$. 
\end{definition}
Note that the minimal family 
	of sets that generates a given ideal is an antichain. 
\begin{definition}
	Given a natural number $k$, $\alpha(k)$ is defined to be 
	the minimum $|\mathcal{S}|$ for which 
	$|\mathbf{ID}(\mathcal{S})|=k$.
\end{definition}

Observe that for every natural number $k$, there is a family 
of sets $\mathcal{S}=\{\emptyset, \{1\},\{2\}\cdots \{k-1\}\}$ 
such that $|\mathbf{ID}(\mathcal{S})|=k$. Therefore, 
$\alpha(k)$ exists for all $k$ and moreover, $\alpha(k) \leq k$. 
In this work, we establish more meaningful bounds on $\alpha(k)$.

\subsection{Combinatorial Background}
Ideals and their symmetric counterpart filters are central 
concepts in the study of set systems. These concepts appear 
in some of the most fundamental theorems regarding set systems. 
For example, Bollobás and Thomason proved that every non-trivial 
monotone increasing/decreasing property of subsets of a set has 
a threshold function \cite{ThresholdF}, in the probabilistic 
model where each element is chosen with probability $p$. Here, 
monotone decreasing property corresponds to ideals. This is one 
of the most significant results in the theory of random graphs 
(see chapter $6$ of \cite{bollobasbook}).

Another well-known result on ideals and dual order ideals is 
Kleitman's lemma \cite{kleitman}, which triggered a long line 
of research on correlation-type inequalities, culminating in 
the Four Functions Theorem of Ahlswede and Daykin 
\cite{fourfunction} (see chapter $6$ of \cite{alonspencer}). 
When studying extremal problems on set systems, it is often 
sufficient to prove the extremality restricted to set systems 
that are ideals or dual-order ideals. For example, see 
Kleitman's proof establishing a tight bound for the cardinality 
of maximal $l$-intersecting families \cite{KLEITMAN1966209} 
(see chapter 13 of \cite{bollobasbook}).

In chapter $17$ of \cite{bollobasbook}, Bollab\'as discusses 
theorems of the form $(m,k) \rightarrow (r,s)$ regarding traces. 
Such a theorem means if the universe $X = [k]$ and a family 
$\mathcal{F}$ consists of $m$ subsets of $X$, then there exists 
an $s$-element subset $S$ of $X$ such that when we take the 
intersection of $S$ with the members of $\mathcal{F}$, we get 
at least $r$ distinct subsets. Alon \cite{ALON1983199} and 
Frankl \cite{Frankl1983OnTT} independently proved that to 
establish any theorem of the form $(m,k) \rightarrow (r,s)$, 
it is sufficient to prove the corresponding statement when 
$\mathcal{F}$ is restricted to an ideal \cite{Frankl1983OnTT}.

Ideals are also studied under the name abstract simplicial 
complex or abstract complex. This represents a combinatorial 
description of the geometric notion of a simplicial complex 
\cite{Lee_2011}. In the context of matroids and greedoids, 
these are also referred to as independence systems 
\cite{Korte_Schrader_Lovasz_1991}.

Several questions closely related to our work have been studied 
in the literature. For instance, Duffus, Howard, and Leader 
investigated the maximum cardinality of an anti-chain that can 
be present in a given ideal \cite{DUFFUS201946}. The problem 
we discuss in this paper—finding the minimum cardinality 
$\alpha(k)$ of the anti-chain that can generate an ideal of 
a given size $k$—has been examined by both computer scientists 
and combinatorialists due to its applications in model counting.

\subsection{Connection between Ideals and Monotone DNFs} 
The model counting problem for monotone DNF formulae has an 
interesting connection to the ideals of set systems. In fact, 
these problems are essentially equivalent. Let $x_1,x_2,\ldots,x_m$ 
be the set of positive literals used in the monotone DNF 
formulae we consider. An assignment assigns a truth value to 
each of these $m$ variables; if the formula evaluates to TRUE 
under this assignment, then the assignment is called a 
satisfying assignment.

Take the universe $U = [m] = \{1,\ldots,m\}$. Given any subset 
$S$ of $U$, we can associate to $S$ a term 
$T_S= \bigwedge_{i \in S} x_i$. Conversely, given a term 
$T= \{ x_{i_1} \land x_{i_2} \land \ldots \land x_{i_t} \}$ 
of a monotone DNF formula, we can associate the subset 
$S_T = \{i_1, i_2, \ldots, i_t\}$ to $T$. Also, to a family 
$\mathcal{F} = \{ S_1, S_2, \ldots, S_t\}$ of subsets of $U$, 
we can associate a monotone DNF formula 
$f_{\mathcal{F}} = T(S_1) \lor T(S_2) \lor \ldots \lor T(S_t)$. 
Conversely, to a monotone DNF formula 
$f = T_1 \lor T_2 \lor \ldots \lor T_{\ell}$, we can associate 
a family of subsets of $U$, namely 
$\mathcal{F}_f = \{S_{T_1}, S_{T_2}, \ldots, S_{T_\ell} \}$.

Thus, there is a one-to-one correspondence between monotone 
DNF formulae using variables $x_1,x_2,\ldots,x_m$ and families 
of subsets of $U$.

Let $\mathcal{F}$ be a family of subsets. Then let 
$\overline{\mathcal{F}} = \{\overline{S}: S \in \mathcal{F}\}$, 
where $\overline{S} = U \setminus S$ is the complement of $S$. 
We can show that the set of satisfying assignments of a 
monotone DNF formula $f$ has a one-to-one correspondence with 
the ideal generated by the family $\overline{\mathcal{F}_f}$. 
This is because for $f$ to be satisfied, at least one term of 
$f$ must be satisfied. If term $T_i$ is satisfied, then all 
literals appearing in $T_i$ must be set to TRUE. So, the set 
of literals that are set to FALSE must correspond to a set of 
indices $S'$ such that $S' \subseteq \overline{S_{T_i}}$. In 
other words, $S' \in \mathbf{ID}(\overline{\mathcal{F}_f})$.

The converse is also true: For 
$S' \in \mathbf{ID}(\overline{\mathcal{F}_f})$, the assignment 
where each variable $x_i$ with $i \in S'$ is set to FALSE and 
the remaining variables set to TRUE will be a satisfying 
assignment for $f$. This is because there will be a superset 
of $S'$ in $\overline{\mathcal{F}_f}$, and the term in $f$ 
that corresponds to the complement of this superset would 
evaluate to TRUE. Since every satisfying assignment can be 
bijectively mapped to the set of variables that are set to 
FALSE, the set of satisfying assignments of $f$ are 
bijectively mapped to the ideal of 
$\overline{\mathcal{F}_f}$. From this discussion, we have:

\begin{theorem}\label{thm:ideals_dnf}
	Let $k$ be a positive integer. If $\mathcal{F}$ is a family 
	of subsets with $|\mathbf{ID}(\mathcal{F})| = k$, then there 
	exists a monotone DNF formula 
	$f = f_{\overline{\mathcal{F}_f}}$ with exactly $k$ 
	satisfying assignments. In particular, a monotone DNF formula 
	with the smallest number of terms and exactly $k$ satisfying 
	assignments will have $\alpha(k)$ terms.
\end{theorem}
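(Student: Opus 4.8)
The statement to prove is Theorem~\ref{thm:ideals_dnf}: given a family $\mathcal{F}$ of subsets of a universe with $|\mathbf{ID}(\mathcal{F})| = k$, there is a monotone DNF with exactly $k$ satisfying assignments, and moreover the minimum number of terms of such a DNF equals $\alpha(k)$.

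The plan is to make the correspondence sketched in the surrounding text fully precise and then read off both halves of the statement. First I would fix the universe $U = [m]$ large enough to contain $\bigcup_{S \in \mathcal{F}} S$, and set $\mathcal{G} = \overline{\mathcal{F}} = \{\,U \setminus S : S \in \mathcal{F}\,\}$; the target DNF is $f = f_{\mathcal{G}} = \bigvee_{S \in \mathcal{F}} T_{U \setminus S}$, a monotone DNF with $|\mathcal{F}|$ terms. Next I would establish the bijection between satisfying assignments of $f$ and elements of $\mathbf{ID}(\mathcal{G})$: send an assignment $a \in \{0,1\}^U$ to the set $Z(a) = \{ i \in U : a(i) = 0 \}$ of indices assigned FALSE. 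I would check (i) if $a$ satisfies $f$ then some term $T_{U\setminus S}$ is satisfied, forcing every variable in $U \setminus S$ to TRUE, i.e. $Z(a) \subseteq S$, hence $Z(a) \in 2^{S} \subseteq \mathbf{ID}(\mathcal{F})$ — wait, here one must be careful: the ideal we want is $\mathbf{ID}(\mathcal{G}) = \mathbf{ID}(\overline{\mathcal F_f})$, and $Z(a)\subseteq S$ with $S\in\mathcal F$ is exactly the statement $Z(a)\in 2^{S}$; but the excerpt's correspondence is with $\mathbf{ID}(\overline{\mathcal F_f})$, so I should instead track $\overline{Z(a)}$, the TRUE-set, and note it contains some $U\setminus S\in\mathcal G$. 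Let me restate cleanly below. (ii) Conversely, for any $I \in \mathbf{ID}(\mathcal{F})$ there is $S \in \mathcal{F}$ with $I \subseteq S$; the assignment setting exactly the variables indexed by $I$ to FALSE makes every variable of $U \setminus S$ TRUE (since $U\setminus S \subseteq U \setminus I$), so $T_{U\setminus S}$ and hence $f$ is satisfied. (iii) $a \mapsto Z(a)$ is injective on all of $\{0,1\}^U$. Together these give a bijection between the satisfying-assignment set of $f$ and $\mathbf{ID}(\mathcal{F})$, so $f$ has exactly $k$ satisfying assignments.

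For the ``in particular'' clause I would argue in both directions. One direction is immediate from the construction just given: taking $\mathcal{F}$ to be a witness for $\alpha(k)$ (a family of size $\alpha(k)$ with $|\mathbf{ID}(\mathcal{F})| = k$) yields a monotone DNF with $\alpha(k)$ terms and exactly $k$ satisfying assignments, so the minimum is at most $\alpha(k)$. For the reverse inequality, I would run the correspondence backwards: given any monotone DNF $f = T_1 \vee \dots \vee T_\ell$ with exactly $k$ satisfying assignments, over variable set $\{x_1,\dots,x_m\}$, form $\mathcal{F}_f = \{S_{T_1},\dots,S_{T_\ell}\}$ and then $\overline{\mathcal{F}_f}$ (complements taken within $[m]$); by the same bijection $|\mathbf{ID}(\overline{\mathcal{F}_f})| = k$, and $|\overline{\mathcal{F}_f}| \le \ell$, so $\alpha(k) \le \ell$. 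Taking $\ell$ minimal gives $\alpha(k)$ is a lower bound on the number of terms, completing the equality.

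The only real subtlety — and the step I would treat most carefully — is bookkeeping of complements and the choice of $U$: the bijection is between FALSE-sets and the ideal of the \emph{complemented} family, and one must make sure the ambient universe is the same on both sides (so that ``complement'' is unambiguous) and that free variables are handled correctly. A clean way to avoid confusion is to phrase everything in terms of the TRUE-set $\overline{Z(a)} = \{i : a(i) = 1\}$ and the family $\mathcal{G} = \overline{\mathcal{F}}$: then $a$ satisfies $f_{\mathcal{G}}$ iff $\overline{Z(a)} \supseteq G$ for some $G \in \mathcal{G}$ iff $Z(a) \subseteq \overline{G}$ for some $\overline{G} \in \mathcal{F}$ iff $Z(a) \in \mathbf{ID}(\mathcal{F})$, and since $Z$ is a bijection from $\{0,1\}^U$ onto $2^U$ the claim follows. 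Everything else is routine unwinding of definitions.
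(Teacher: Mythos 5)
Your proposal is correct and follows essentially the same route as the paper: the bijection between satisfying assignments and elements of $\mathbf{ID}(\overline{\mathcal{F}_f})$ via the FALSE-set $Z(a)$, applied in both directions to get the equality with $\alpha(k)$. You are somewhat more careful than the paper about fixing the ambient universe and about spelling out the reverse inequality $\alpha(k)\leq\ell$ (noting $|\overline{\mathcal{F}_f}|\leq\ell$), but these are refinements of the same argument, not a different one.
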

\begin{corollary}\label{alpha_beta_simp_conn}
For $k\geq 1$, $\alpha(k) \geq \beta(k)$    
\end{corollary}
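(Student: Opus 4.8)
The plan is to read off the inequality directly from Theorem~\ref{thm:ideals_dnf}. First I would invoke the definition of $\alpha(k)$ to fix a family of subsets $\mathcal{S}$ that is optimal in the sense that $|\mathbf{ID}(\mathcal{S})| = k$ and $|\mathcal{S}| = \alpha(k)$; such an $\mathcal{S}$ exists because $\alpha(k)$ is defined as a minimum over a nonempty set (the family $\{\emptyset,\{1\},\ldots,\{k-1\}\}$ witnesses nonemptiness). Applying Theorem~\ref{thm:ideals_dnf} to $\mathcal{S}$ then produces a monotone DNF formula $f$ with exactly $k$ satisfying assignments, and this formula has $|\mathcal{S}| = \alpha(k)$ terms.

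Next I would observe that a monotone DNF is a particular kind of DNF, hence a particular kind of formula that $\beta(k)$ minimizes over. Since $\beta(k)$ is by definition the minimum number of terms (or clauses) among all DNF or CNF formulas with exactly $k$ satisfying assignments, and $f$ is one such formula with $\alpha(k)$ terms, we conclude $\beta(k) \leq \alpha(k)$, which is exactly the claimed inequality (for $k \geq 1$, matching the range in which $\alpha(k)$ and $\beta(k)$ are both defined).

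I do not expect any genuine obstacle here: the statement is essentially an immediate corollary of Theorem~\ref{thm:ideals_dnf}, the only content being the bookkeeping remark that the class of monotone DNFs over which $\alpha(k)$ implicitly ranges is contained in the broader class (DNFs and CNFs, monotone or not) over which $\beta(k)$ ranges, so passing to the larger class can only decrease the optimal value. One could mention in passing that the reverse inequality need not hold, since a CNF or a non-monotone DNF might in principle do better, which is precisely why the corollary is phrased as an inequality rather than an equality; but establishing that is not needed for the proof of the corollary itself.
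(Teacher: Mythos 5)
Your argument is correct and is exactly the route the paper intends: the corollary is stated without separate proof because it follows immediately from Theorem~\ref{thm:ideals_dnf} by taking an optimal family $\mathcal{S}$ with $|\mathbf{ID}(\mathcal{S})|=k$ and $|\mathcal{S}|=\alpha(k)$, producing a monotone DNF with $\alpha(k)$ terms and $k$ satisfying assignments, and noting that $\beta(k)$ minimizes over the strictly larger class of all DNFs and CNFs. No gaps.
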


\begin{remark}
	A similar statement can be made about monotone CNF formulae. 
	The difference is that $k$ would represent the number of 
	non-satisfying assignments, and a subset in the ideal would 
	correspond to the variables assigned TRUE.
\end{remark}

\section{Proof of Lower Bound}\label{sec:lower}
We first state the inclusion-exclusion principle
\begin{theorem}\label{inclexclthmpwset}
	For finite sets $V_1,V_2\cdots V_q$
	$$
	|\bigcup_{i=1}^{q} V_i | = \sum_{\emptyset \neq J \subseteq \{1,2\cdots,q\}}(-1)^{|J|+1}  \Biggl|\bigcap_{j \in J} V_j\Biggr|
	$$    
\end{theorem}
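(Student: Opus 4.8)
The plan is to prove this by a double-counting (indicator-function) argument. Let $U = \bigcup_{i=1}^q V_i$ and fix an arbitrary element $x \in U$. I would show that $x$ contributes exactly $1$ to each side of the claimed identity; summing over all $x \in U$ then yields the result, since $|U|$ counts each of its elements once and each cardinality $\bigl|\bigcap_{j\in J} V_j\bigr|$ on the right-hand side counts $x$ precisely when $x \in \bigcap_{j\in J} V_j$.

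Concretely, let $R = \{\, i \in [q] : x \in V_i \,\}$ and $r = |R| \geq 1$, which is nonempty because $x \in U$. For a nonempty $J \subseteq [q]$ we have $x \in \bigcap_{j \in J} V_j$ if and only if $J \subseteq R$. Hence the total contribution of $x$ to the right-hand side is
$$\sum_{\emptyset \neq J \subseteq R} (-1)^{|J|+1} = \sum_{t=1}^{r} \binom{r}{t}(-1)^{t+1} = -\Bigl( \sum_{t=0}^{r} \binom{r}{t}(-1)^{t} - 1 \Bigr) = 1,$$
where the last equality uses the binomial identity $\sum_{t=0}^{r}\binom{r}{t}(-1)^{t} = (1-1)^r = 0$. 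Since $x$ contributes $1$ to the left-hand side as well, the two sides agree after summing over $x \in U$.

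An alternative route is induction on $q$: the base case $q = 1$ is immediate, the case $q = 2$ is the elementary identity $|A \cup B| = |A| + |B| - |A \cap B|$, and the inductive step follows by writing $\bigcup_{i=1}^{q} V_i = \bigl(\bigcup_{i=1}^{q-1} V_i\bigr) \cup V_q$, applying the two-set identity, using distributivity $\bigl(\bigcup_{i=1}^{q-1} V_i\bigr) \cap V_q = \bigcup_{i=1}^{q-1} (V_i \cap V_q)$, and invoking the induction hypothesis on both unions of $q-1$ sets. There is no genuine obstacle here; in the double-counting proof one only needs to be careful that $R$ is nonempty so the binomial sum starts at $t = 1$, and in the inductive proof that the split of the index sets $J$ into those containing $q$ and those not is reindexed cleanly.
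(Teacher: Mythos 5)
Your double-counting argument is correct and complete: fixing $x$ in the union, restricting to the nonempty index set $R$ of sets containing $x$, and evaluating $\sum_{t=1}^{r}\binom{r}{t}(-1)^{t+1}=1$ via $(1-1)^r=0$ is the standard and fully rigorous proof of inclusion--exclusion, and your inductive alternative is also sound. Note that the paper itself offers no proof of this statement --- it is quoted as a classical fact and only used as input to Observation~\ref{dnfinclexclthmpwset} and Observation~\ref{inclexclpowset} --- so your write-up supplies a proof where the paper deliberately omits one; either of your two routes would serve.
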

For a formulae $\mathcal{F}$, let $Sol(\mathcal{F})$ denote the set of satisfying assignments for $\mathcal{F}$.

\begin{observation}\label{dnfinclexclthmpwset}
	For a  DNF formula $\mathcal{F} =T_1 \lor T_2 \cdots \lor T_{q}$, we have $$|Sol(\mathcal{F})|=\sum_{\emptyset \neq J \subseteq \{1,2\cdots,q\}}(-1)^{|J|+1}  \Biggl| Sol(\bigwedge_{j \in J} T_j)\Biggr|$$
\end{observation}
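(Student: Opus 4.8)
The plan is to obtain this as a direct specialization of the set-theoretic inclusion–exclusion principle (Theorem~\ref{inclexclthmpwset}), with the sets $V_i$ chosen to be the solution sets of the individual terms. Fix the ambient variable set $x_1,\dots,x_m$ that $\mathcal{F}$ uses, so that every ``assignment'' is a point of the finite set $\{0,1\}^m$, and set $V_i = Sol(T_i) \subseteq \{0,1\}^m$ for $i \in [q]$.

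First I would record the semantic fact that $Sol(\mathcal{F}) = \bigcup_{i=1}^{q} Sol(T_i)$: by definition of disjunction, an assignment satisfies $T_1 \lor T_2 \lor \cdots \lor T_q$ if and only if it satisfies at least one $T_i$. Hence $|Sol(\mathcal{F})| = \bigl|\bigcup_{i=1}^{q} V_i\bigr|$. Next, for every nonempty $J \subseteq \{1,\dots,q\}$ I would note the dual fact that an assignment lies in $\bigcap_{j \in J} V_j$ exactly when it satisfies $T_j$ for all $j \in J$, which by definition of conjunction is exactly the condition for satisfying $\bigwedge_{j \in J} T_j$; therefore $\bigcap_{j \in J} V_j = Sol\bigl(\bigwedge_{j \in J} T_j\bigr)$.

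Plugging these two identities into Theorem~\ref{inclexclthmpwset} applied to $V_1,\dots,V_q$ immediately gives
$$|Sol(\mathcal{F})| = \sum_{\emptyset \neq J \subseteq \{1,\dots,q\}} (-1)^{|J|+1} \Bigl| Sol\bigl(\textstyle\bigwedge_{j \in J} T_j\bigr) \Bigr|,$$
which is the claimed formula. There is essentially no obstacle here; the only point requiring a word of care is the bookkeeping about the common variable set — one must make sure that $Sol(T_i)$, $Sol(\bigwedge_{j \in J} T_j)$, and $Sol(\mathcal{F})$ are all interpreted as subsets of the same $\{0,1\}^m$ (free variables of a term are treated as unconstrained), so that the set operations in the inclusion–exclusion identity mean what we want. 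Once that convention is stated, the observation follows in a couple of lines.
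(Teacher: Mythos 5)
Your proposal is correct and follows essentially the same route as the paper: identify $Sol(\mathcal{F})$ with $\bigcup_i Sol(T_i)$, note that $\bigcap_{j\in J} Sol(T_j) = Sol(\bigwedge_{j\in J} T_j)$, and apply \cref{inclexclthmpwset}. Your explicit remark about fixing a common ambient set $\{0,1\}^m$ is a small but harmless addition the paper leaves implicit.
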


\begin{proof}
	For the DNF formula $\mathcal{F} =T_1 \lor T_2 \cdots \lor T_{q}$, it is easy to see that $$Sol(\mathcal{F}) = \bigcup_{i=1}^{q} Sol(T_i)$$. Therefore, from \cref{inclexclthmpwset}
	$$
	| \bigcup_{i=1}^{q} Sol(T_i) | = \sum_{\emptyset \neq J \subseteq \{1,2\cdots,q\}}(-1)^{|J|+1}  \Biggl|\bigcap_{j \in J} Sol(T_j)\Biggr|
	$$ 
	Observe that $Sol(T_i) \cap Sol(T_j) = Sol(T_i \land T_j)$. Therefore, 
	$$\sum_{\emptyset \neq J \subseteq \{1,2\cdots,q\}}(-1)^{|J|+1}  \Biggl|\bigcap_{j \in J} Sol(T_j)\Biggr| = \sum_{\emptyset \neq J \subseteq \{1,2\cdots,q\}}(-1)^{|J|+1}  \Biggl| Sol(\bigwedge_{j \in J} T_j)\Biggr|
	$$
	
\end{proof}

\begin{observation}\label{2powerobs}
	For any non-empty set $J \subseteq \{1,2\cdots,q\}$, the value of $\Biggl| Sol(\bigwedge_{j \in J} T_j)\Biggr|$ is either $0$ or of the form $2^{\alpha}$ for some $\alpha \in \mathbb{N} \cup \{0\}$.
\end{observation}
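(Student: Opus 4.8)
The plan is to exploit the fact that a conjunction of terms is itself a single term, i.e. a conjunction of literals, and then to split into two cases according to whether that conjunction is internally consistent. Let $x_1,\dots,x_m$ be the variables on which the DNF $\mathcal{F}$ is defined. Since each $T_j$ is by definition a conjunction of literals, the formula $T' \coloneqq \bigwedge_{j\in J} T_j$ is again a conjunction of literals over $\{x_1,\dots,x_m\}$; I would write $T' = \bigwedge_{i\in P} x_i \wedge \bigwedge_{i\in N} \overline{x_i}$, where $P$ and $N$ record the indices of the variables occurring positively and negatively in $T'$, respectively.

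The first case is $P\cap N \neq \emptyset$: then some variable is required to be simultaneously TRUE and FALSE, so $T'$ is unsatisfiable and $|Sol(T')| = 0$. The second case is $P\cap N = \emptyset$: then an assignment satisfies $T'$ precisely when it sets $x_i$ to TRUE for every $i\in P$ and to FALSE for every $i\in N$, while the remaining $m - |P\cup N|$ variables are unconstrained. Hence $|Sol(T')| = 2^{\,m - |P\cup N|}$, which is of the claimed form $2^{\alpha}$ with $\alpha = m - |P\cup N| \in \mathbb{N}\cup\{0\}$; here $\alpha \ge 0$ because $P\cup N$ is a set of indices of variables of $\mathcal{F}$, so $|P\cup N|\le m$. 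As these two cases are exhaustive, the observation follows.

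There is no real obstacle here: the only point requiring care is bookkeeping — being explicit that assignments are counted over the full variable set $\{x_1,\dots,x_m\}$ of $\mathcal{F}$ (so that ``free'' variables are accounted for), and recording the elementary fact that the conjunction of several terms collapses to one conjunction of literals, so that the case analysis above is legitimate.
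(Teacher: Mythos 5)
Your proof is correct and follows essentially the same route as the paper: both arguments classify each variable as forced true, forced false, free, or contradictorily constrained, concluding that the solution count is $0$ or $2^{\alpha}$ with $\alpha$ the number of free variables. Your packaging of the conjunction as a single term $\bigwedge_{i\in P} x_i \wedge \bigwedge_{i\in N} \overline{x_i}$ is just a cleaner bookkeeping of the paper's per-literal case analysis.
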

\begin{proof}
	Consider a literal $y$. We have the following cases
	\begin{itemize}
		\item If the literal $y$ appears positively in $T_i$ and negatively in $T_j$ for some $i,j \in J$, then $\bigwedge_{j \in J} T_j$ has no solution.
		\item If the literal $y$ appears positively for at least one $T_i$ for $i \in J$ and never appears negatively for any $T_j$ for $j \in J$, then $y$ must be true in all satisfying assignments.
		\item  Similarly, if the literal $y$ appears negatively for at least one $T_i$ for $i \in J$ and never appears positively for any $T_j$ for $j \in J$, then $y$ must be false in all satisfying assignments.
		\item If the literal $y$ does not appear in any $T_j$ for $j \in J$, then it can take the true or false value in an assignment. 
	\end{itemize}
	It is now easy to see that if there are $\alpha$ such literals which never appeared in any $T_j$ for $j \in J$, there will be $2^{\alpha}$ many satisfying assignments. 
\end{proof}

\begin{observation}\label{obs:num_blocks}
	Let $t \in \mathbb{N}$ and $x_i,y_i\in \{0\}\cup \mathbb{N}$ for all $i\in [t]$. If $k=\sum\limits_{i\in [t]}(-1)^{x_i}2^{y_i} \geq 1$, then 
	$bl(k)\leq t$.  
\end{observation}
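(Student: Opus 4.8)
The plan is to think of $k = \sum_{i \in [t]} (-1)^{x_i} 2^{y_i}$ as a signed sum of $t$ powers of two and show, by induction on $t$, that collapsing such a sum into its binary block form can create at most $t$ blocks of ones. First I would set up the induction: for $t = 1$ we have $k = 2^{y_1}$ (the negative case is impossible since $k \geq 1$), whose block binary representation is $\mathbf{1}_1 \mathbf{0}_{y_1}$, so $bl(k) = 1 = t$. For the inductive step, write $k = k' + (-1)^{x_t} 2^{y_t}$ where $k' = \sum_{i \in [t-1]} (-1)^{x_i} 2^{y_i}$, and I would want to argue that adding or subtracting a single power of two to a natural number changes its block count by at most $1$. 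There is a subtlety: $k'$ itself need not be positive or even nonnegative, so the cleanest route is probably to prove a slightly more general statement that tracks signed integers, or to reorganize the sum so that partial sums stay nonnegative (e.g., process the positive terms first). I would state and prove the auxiliary fact: \emph{if $n \in \mathbb{N}$ and $n + (-1)^x 2^y \geq 1$, then $bl(n + (-1)^x 2^y) \leq bl(n) + 1$}, and also handle the base case $n = 0$ separately (where $bl(0)$ is taken to be $0$ and $0 + 2^y$ has block count $1$).

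The core of the argument is the auxiliary fact about adding/subtracting a single power of two. For addition: when we add $2^y$ to $n$, in binary this either flips a $0$ in position $y$ to a $1$ (splitting at most one block of ones into two, hence $+1$ block), or it triggers a carry chain that turns a run of ones into zeros and sets a higher bit — a carry can only \emph{merge} blocks or create at most one new block at the top, so again the block count goes up by at most $1$. For subtraction, $n - 2^y \geq 1$: either bit $y$ of $n$ is $1$, and clearing it splits at most one block (at most $+1$); or bit $y$ is $0$, triggering a borrow chain that turns a run of zeros into ones and clears a higher $1$-bit, which again changes the block structure by at most one block. I would make this precise by a short case analysis on the binary carry/borrow propagation, phrased in terms of the block binary representation from the definition.

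The main obstacle I expect is the bookkeeping around \emph{signs and nonnegativity of partial sums}: the observation's hypothesis only guarantees the \emph{final} sum is $\geq 1$, not the intermediate ones, so a naive induction on $t$ reusing the hypothesis does not directly work. The fix is to reorder the summands so that all the $+2^{y_i}$ terms come first (giving nonnegative partial sums throughout the positive phase) and then subtract the $-2^{y_i}$ terms; but then I must ensure that each subtraction keeps the running total at least $1$. This is not automatic from the hypothesis alone, so I would instead prove the stronger claim by induction: \emph{for any integers $z_1, \ldots, z_t$ of the form $\pm 2^{y_i}$ with $\sum z_i \geq 1$, we have $bl(\sum z_i) \leq t$}, where the induction peels off one term while allowing the remaining partial sum to be an arbitrary integer and invoking a version of the auxiliary lemma valid for all integers (with $bl$ of a nonpositive integer left undefined but the lemma's conclusion only asserted when both the input and output are $\geq 1$, and handling the case where removing a term drops the partial sum to $\le 0$ by instead peeling a different term). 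Getting the peeling order right — always removing a term whose sign matches making progress toward a valid smaller instance — is the delicate point; once that is set up, each step uses the single-power-of-two lemma and the bound $t \mapsto t-1$ follows immediately.
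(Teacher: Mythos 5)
Your plan matches the paper's proof: induction on $t$, with a carry/borrow case analysis showing that adding or subtracting a single power of two changes the block count by at most one. The ``delicate point'' you flag --- choosing which term to peel so that the remaining partial sum stays $\geq 1$ --- is resolved in the paper by always removing the \emph{minimum} summand $\min_{i}(-1)^{x_i}2^{y_i}$: if some term is negative the minimum is negative and removing it only increases the sum, while if all terms are positive the remaining $t-1$ positive terms already sum to at least $1$, so the induction hypothesis always applies.
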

\begin{proof}
We prove this by induction on $t$. When $t=1$, $x_1$ must be even since $k\geq 1$. Therefore, $k$ is of the form $2^{y_1}$ and hence $bl(k)=1=t$.
	
Let $t>1$ and $m=k-\min\limits_{i\in [t]} (-1)^{x_i}2^{y_i}$. For $i\in [t]$, if there exists an $x_i$ which is odd, then $\min\limits_{i\in [t]} (-1)^{x_i}2^{y_i} <0$ and hence $m \geq 1$. On the other hand, if for all $i\in [t]$, $x_i$ is even, then $m$ can be written as the sum of positive integers. Therefore, $m\geq 1$. Therefore, by the induction assumption $bl(m)\leq t-1$.

Observe that when $(-1)^{x_i}2^{y_i}$ is added to the binary representation of $m$, $1$ is either added or subtracted at the $(y_i+1)$th bit of $m$. Suppose the $(y_i+1)$th bit of $m$ be $0$ (respectively $1$). Then adding (respectively subtracting) $1$ at the $(y_i+1)$th position changes the number of blocks by at most $1$. On the other hand, when the $(y_i+1)$th bit of $m$ is $1$ (respectively $0$), adding (respectively subtracting) $1$ at the $(y_i+1)$th position flips all contiguous $1$s (respectively $0$s) at and before $(y_i+1)$th position and the first preceding $0$ (respectively $1$). Therefore, the number of blocks change by at most $1$.

Therefore, as $k=m+(-1)^{x_t}2^{y_t}$, $bl(k)\leq bl(m)+1 \leq t$.
\end{proof}

\begin{lemma}
	For every $k\in \mathbb{N}$, $\log (bl(k)+1) \leq \beta(k)$    
\end{lemma}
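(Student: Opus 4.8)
The plan is to prove the stronger statement $bl(k)\le 2^{\beta(k)}-1$, since taking base-$2$ logarithms then gives $\log(bl(k)+1)\le\beta(k)$. Write $q=\beta(k)$: by definition there is a DNF with $q$ terms, or a CNF with $q$ clauses, having exactly $k$ satisfying assignments, and I would handle these two cases separately.

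\emph{DNF case.} Suppose $\mathcal{F}=T_1\vee\cdots\vee T_q$ with $|Sol(\mathcal{F})|=k$. By \cref{dnfinclexclthmpwset}, $k$ is an alternating sum, over the $2^q-1$ nonempty $J\subseteq[q]$, of the quantities $|Sol(\bigwedge_{j\in J}T_j)|$, and by \cref{2powerobs} each such quantity is $0$ or a power of $2$. Hence $k=\sum_{i=1}^{t}(-1)^{x_i}2^{y_i}$ for some $t\le 2^q-1$, and \cref{obs:num_blocks} gives $bl(k)\le t\le 2^q-1$.

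\emph{CNF case.} Suppose $\mathcal{F}=C_1\wedge\cdots\wedge C_q$ over $n$ variables, each clause non-empty (otherwise $k=0$). Then $k\le 2^n$; if $k=2^n$ then $bl(k)=1$ and we are done, so assume $1\le k\le 2^n-1$. The negation $\neg\mathcal{F}=\neg C_1\vee\cdots\vee\neg C_q$ is a DNF with $q$ terms having $M:=2^n-k$ satisfying assignments, with $1\le M\le 2^n-1$. Running the DNF argument on $\neg\mathcal{F}$ gives $M=\sum_{i=1}^{t}(-1)^{x_i}2^{y_i}$ with $t\le 2^q-1$; moreover each conjunction $\bigwedge_{j\in J}\neg C_j$ with $J\ne\emptyset$ contains at least one literal, so all exponents satisfy $y_i\le n-1$. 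It thus suffices to prove the following claim: \emph{if $1\le M\le 2^n-1$ and $M=\sum_{i=1}^{t}(-1)^{x_i}2^{y_i}$ with every $y_i\le n-1$, then $bl(2^n-M)\le t$.} I would prove this by induction on $t$, imitating the proof of \cref{obs:num_blocks}. For $t=1$ the sign is forced positive, so $M=2^{y_1}$ and $2^n-M$ has block binary representation $\mathbf{1}_{n-y_1}\mathbf{0}_{y_1}$, i.e.\ $bl(2^n-M)=1$. For $t\ge 2$ I would delete one term of $M$ to leave a partial sum $m$ with $1\le m\le 2^n-1$; the induction hypothesis then gives $bl(2^n-m)\le t-1$, and since $2^n-M$ and $2^n-m$ are positive integers differing by a single power of $2$, the block-count estimate in the proof of \cref{obs:num_blocks} yields $bl(2^n-M)\le bl(2^n-m)+1\le t$. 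A deletion keeping the partial sum in $[1,2^n-1]$ always exists: if all terms of $M$ are positive, delete one of smallest exponent (the sum only drops, and the $t-1\ge 1$ surviving positive terms keep it $\ge 1$); if $M$ has a negative term, first try deleting a negative term of smallest exponent (the sum only rises and stays $\ge 1$), and if that pushes it past $2^n-1$ — which forces $M\ge 2^{n-1}$ — delete a positive term of smallest exponent instead, which keeps the sum $\ge 1$ unless $M=2^{n-1}$, a case handled directly since then $bl(2^n-M)=bl(2^{n-1})=1$. Combining the two cases gives $bl(k)\le 2^{\beta(k)}-1$.

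The two appeals to \cref{dnfinclexclthmpwset}, \cref{2powerobs}, \cref{obs:num_blocks} and the passage from a CNF to its negated DNF should be routine. The step I expect to be the real obstacle is the induction in the claim — in particular the bookkeeping needed to keep every partial sum of $M$ inside the window $[1,2^n-1]$ on which the inductive hypothesis lives, which is precisely what forces the split between deleting a positive and a negative term.
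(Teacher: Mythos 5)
Your DNF case is exactly the paper's proof: inclusion--exclusion (\cref{dnfinclexclthmpwset}) writes $k$ as a signed sum of at most $2^{q}-1$ powers of two (\cref{2powerobs}), and \cref{obs:num_blocks} gives $bl(k)\le 2^{q}-1$, i.e.\ $\log(bl(k)+1)\le q$. Where you genuinely diverge is the CNF case. The paper's proof simply takes ``a DNF with $\beta(k)$ terms and exactly $k$ solutions,'' even though $\beta(k)$ is defined as a minimum over both DNFs and CNFs, so the case where the minimum is attained by a CNF is silently omitted; your proposal fills that gap. Your treatment of it is correct and is not a triviality: the obvious reduction --- negate to a $q$-term DNF with $M=2^{n}-k$ models, write $M$ as a signed sum of at most $2^{q}-1$ powers of two, and then feed $k=2^{n}-M$ (one extra power of two) into \cref{obs:num_blocks} --- yields only $bl(k)\le 2^{q}$, hence $\log(bl(k)+1)\le\log(2^{q}+1)>q$, narrowly missing the stated bound. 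Your refined claim, that $bl(2^{n}-M)\le t$ when every exponent is at most $n-1$, is exactly what recovers the constant, and the exponent bound does hold because each $\neg C_j$ is a nonempty term. The induction goes through: the base case is the run $\mathbf{1}_{n-y_1}\mathbf{0}_{y_1}$, the ``changes $bl$ by at most one'' step is the same local bit-flipping argument as in the proof of \cref{obs:num_blocks}, and your case analysis for which term to delete (smallest-exponent negative if that keeps the sum below $2^{n}$, otherwise smallest-exponent positive, with $M=2^{n-1}$ disposed of directly) correctly keeps every partial sum inside $[1,2^{n}-1]$. So: same core mechanism as the paper for DNFs, plus a correct and genuinely needed patch for the CNF case. (For the asymptotic $\Omega(\log\log k)$ claim the lost constant would be immaterial, so one could also just quote $bl(k)\le 2^{q}$ there; but your version matches the lemma as literally stated.)
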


\begin{proof}
	Towards a contradiction, let there exist some $k\in \mathbb{N}$ such that $\log (bl(k)+1) > \beta(k)$. Let $\mathcal{F}= T_1 \lor T_2 \cdots \lor T_{\beta(k)} $ be a DNF such that it has exactly $k$ solutions. From \cref{dnfinclexclthmpwset},
	
	$$|Sol(\mathcal{F})|=\sum_{\emptyset \neq J \subseteq \{1,2\cdots,\beta(k)\}}(-1)^{|J|+1}  \Biggl| Sol(\bigwedge_{j \in J} T_j)\Biggr|$$

	From \cref{2powerobs}, it is easy to see that $|Sol(\mathcal{F})|$ can be written as the sum or difference of $2^{\beta(k)}$ or less terms that are powers of $2$. Therefore, from \cref{obs:num_blocks}, $bl(k)\leq 2^{\beta(k)}-1 < bl(k)+1-1 = bl(k)$. This is a contradiction.

\end{proof}
From \Cref{alpha_beta_simp_conn}, it is also easy to see the following corollary now. 
\begin{corollary}
	For every $k\in \mathbb{N}$, $\log (bl(k)+1) \leq \alpha(k)$     
\end{corollary}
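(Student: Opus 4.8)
The plan is to deduce the corollary directly from the lemma $\log(bl(k)+1)\le\beta(k)$ that was just proved, combined with \Cref{alpha_beta_simp_conn}, which states $\alpha(k)\ge\beta(k)$ for all $k\ge 1$. This is a one-line chaining argument: for every $k\in\mathbb{N}$ we have $\log(bl(k)+1)\le\beta(k)\le\alpha(k)$, and the first and last quantities give the claimed inequality. So the proof is essentially a transitivity of inequalities, and there is no real obstacle here — the whole content was already established in the lemma and in \Cref{thm:ideals_dnf} (via its corollary).

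The only point that deserves a moment's care is the edge case $k=1$: the lemma is stated for every $k\in\mathbb{N}$, and \Cref{alpha_beta_simp_conn} is stated for $k\ge 1$, so the ranges line up and nothing special needs to be said. If one wanted to be extra careful one could note that $bl(1)=1$ (the block binary representation of $1$ is $\mathbf{1}_1$), so the bound reads $\log 2 = 1 \le \alpha(1)$, which is trivially true since generating an ideal of size $1$, namely $\{\emptyset\}$, requires the single set $\emptyset$, i.e. $\alpha(1)=1$. But this sanity check is not needed for the proof.

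Concretely, I would write: by \Cref{alpha_beta_simp_conn}, $\alpha(k)\ge\beta(k)$ for every $k\ge 1$; by the preceding lemma, $\beta(k)\ge\log(bl(k)+1)$ for every $k\in\mathbb{N}$; combining these two inequalities yields $\alpha(k)\ge\log(bl(k)+1)$ for every $k\in\mathbb{N}$, which is exactly the statement of the corollary. The hard part — proving the lower bound on $\beta(k)$ via inclusion–exclusion and the block-count argument of \Cref{obs:num_blocks} — has already been done; transferring it to $\alpha(k)$ costs nothing beyond invoking the monotone-DNF/ideal correspondence of \Cref{thm:ideals_dnf}.
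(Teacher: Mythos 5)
Your proof is correct and is exactly the paper's argument: the corollary is obtained by chaining the lower bound $\log(bl(k)+1)\le\beta(k)$ from the preceding lemma with $\beta(k)\le\alpha(k)$ from \Cref{alpha_beta_simp_conn}. Nothing further is needed.
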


\section{Proof of Upper Bound}\label{sec:upper}

We first establish a framework for analyzing the size of ideals generated by 
families of sets. Our approach leverages the inclusion-exclusion principle 
and introduces two key operations—splitting and lifting—that will be central 
to our construction.

\subsection{Technical Preliminaries}

A key insight for our analysis is understanding how the inclusion-exclusion 
principle applies to unions of power sets. This is captured in the following 
observation:

\begin{observation}\label{inclexclpowset}
	For finite sets $V_1,V_2\cdots V_q$
	$$
	|\bigcup_{i=1}^{q} 2^{V_i} | = \sum_{j=1}^{q}(-1)^{j+1} ( \sum_{1 \leq 
		i_1 < \cdots < i_j \leq q} 2^{|V_{i_1} \cap \cdots \cap V_{i_j}|} )
	$$      
\end{observation}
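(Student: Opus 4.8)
The plan is to derive this directly from the set-theoretic inclusion-exclusion principle (\Cref{inclexclthmpwset}) by applying it to the family $2^{V_1}, 2^{V_2}, \ldots, 2^{V_q}$ of power sets, and then simplifying the intersection terms using a single elementary identity about power sets.

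The key identity is that for any finite sets $A$ and $B$, $2^{A}\cap 2^{B} = 2^{A\cap B}$: a set $T$ lies in $2^{A}\cap 2^{B}$ iff $T\subseteq A$ and $T\subseteq B$, which holds iff $T\subseteq A\cap B$. By a trivial induction on $|J|$, this extends to $\bigcap_{j\in J} 2^{V_j} = 2^{\bigcap_{j\in J} V_j}$ for every nonempty $J\subseteq\{1,\ldots,q\}$. Consequently $\bigl|\bigcap_{j\in J} 2^{V_j}\bigr| = \bigl|2^{\bigcap_{j\in J} V_j}\bigr| = 2^{\lvert\bigcap_{j\in J} V_j\rvert}$, using that a finite set of size $s$ has $2^{s}$ subsets.

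With this in hand I would instantiate \Cref{inclexclthmpwset} with $V_i$ replaced by $2^{V_i}$ to get
$$
\Bigl|\bigcup_{i=1}^{q} 2^{V_i}\Bigr| = \sum_{\emptyset\neq J\subseteq\{1,\ldots,q\}} (-1)^{|J|+1}\,\Bigl|\bigcap_{j\in J} 2^{V_j}\Bigr| = \sum_{\emptyset\neq J\subseteq\{1,\ldots,q\}} (-1)^{|J|+1}\, 2^{\lvert\bigcap_{j\in J} V_j\rvert}.
$$
Finally I would regroup the sum over nonempty subsets $J$ according to their cardinality $j=|J|$, writing each size-$j$ subset as $\{i_1,\ldots,i_j\}$ with $1\le i_1<\cdots<i_j\le q$; since $(-1)^{|J|+1}$ depends only on $j$, this yields exactly the claimed formula
$$
\sum_{j=1}^{q}(-1)^{j+1}\Bigl(\sum_{1\le i_1<\cdots<i_j\le q} 2^{\lvert V_{i_1}\cap\cdots\cap V_{i_j}\rvert}\Bigr).
$$

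There is no real obstacle here: the statement is a routine corollary of \Cref{inclexclthmpwset}, and the only thing needing care is stating the power-set intersection identity cleanly and noting the bijection between nonempty subsets of $\{1,\ldots,q\}$ of size $j$ and increasing $j$-tuples, so that the reindexing is justified. I would keep the write-up short and emphasize these two points.
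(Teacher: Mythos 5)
Your proposal is correct and matches the paper's own proof: both apply the inclusion--exclusion principle to the power sets $2^{V_i}$ and then use the identity $2^{V_{i_1}}\cap\cdots\cap 2^{V_{i_j}} = 2^{V_{i_1}\cap\cdots\cap V_{i_j}}$ together with $|2^{S}| = 2^{|S|}$. Your write-up is simply a slightly more explicit version of the same argument, spelling out the power-set intersection identity and the regrouping by subset size.
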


\begin{proof}
	From the inclusion-exclusion principle,
	$$
	|\bigcup_{i=1}^{q} 2^{V_i} | = \sum_{j=1}^{q}(-1)^{j+1} ( \sum_{1 \leq 
		i_1 < \cdots < i_j \leq q} |2^{V_{i_1}} \cap \cdots \cap 2^{V_{i_j}}| )
	$$
	$$
	= \sum_{j=1}^{q}(-1)^{j+1} ( \sum_{1 \leq i_1 < \cdots < i_j \leq q} 
	|2^{V_{i_1} \cap \cdots \cap V_{i_j}}| )
	= \sum_{j=1}^{q}(-1)^{j+1} ( \sum_{1 \leq i_1 < \cdots < i_j \leq q} 
	2^{|V_{i_1} \cap \cdots \cap V_{i_j}|} )
	$$     
\end{proof}

\subsection{Fundamental Operations: Splitting and Lifting}

We now introduce two fundamental operations that will serve as building 
blocks for our upper-bound construction. These operations allow us to 
construct ideals with specific cardinalities efficiently. 

\begin{lemma}[\textit{Splitting lemma}]\label{split} 
	For $m,k \in \mathbb{N}$, $\alpha(m+k)\leq \alpha(m)+\alpha(k+1)$
\end{lemma}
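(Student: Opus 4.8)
The plan is to take optimal antichains witnessing $\alpha(m)$ and $\alpha(k+1)$ and glue them together on disjoint universes, using one extra "coordinate" to separate the two contributions. Let $\mathcal{A} = \{A_1,\dots,A_{\alpha(m)}\}$ be an antichain with $|\mathbf{ID}(\mathcal{A})| = m$, living on some universe $U_A$, and let $\mathcal{B} = \{B_1,\dots,B_{\alpha(k+1)}\}$ be an antichain with $|\mathbf{ID}(\mathcal{B})| = k+1$, living on a universe $U_B$ disjoint from $U_A$. Introduce one fresh element $z \notin U_A \cup U_B$. The idea is that the ideal of $\mathcal{B}$ on its own already contains $\emptyset$, so if we "lift" the $\mathcal{A}$-part by forcing it to contain $z$, while leaving the $\mathcal{B}$-part at level $z \notin \cdot$, the two families generate ideals whose supports overlap only in the sets not containing $z$ that lie in both — and we can arrange that overlap to be exactly $\{\emptyset\}$.

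Concretely, I would set
$$
\mathcal{S} \;=\; \bigl(\mathcal{A} + \{z\}\bigr) \;\cup\; \mathcal{B}
\;=\; \{A_1 \cup \{z\},\dots,A_{\alpha(m)}\cup\{z\}\} \;\cup\; \{B_1,\dots,B_{\alpha(k+1)}\}.
$$
Then $|\mathcal{S}| \le \alpha(m) + \alpha(k+1)$. Now I compute $|\mathbf{ID}(\mathcal{S})|$. A set $T$ lies in $\mathbf{ID}(\mathcal{S})$ iff $T \subseteq A_i \cup \{z\}$ for some $i$ or $T \subseteq B_j$ for some $j$. Split on whether $z \in T$. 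If $z \in T$, then $T$ can only be covered by some $A_i \cup \{z\}$, so $T = T' \cup \{z\}$ with $T' \in \mathbf{ID}(\mathcal{A})$; this gives exactly $|\mathbf{ID}(\mathcal{A})| = m$ sets. If $z \notin T$, then $T \in \mathbf{ID}(\mathcal{A}) \cup \mathbf{ID}(\mathcal{B})$ (the "$+\{z\}$" is harmless once we drop $z$); since $U_A$ and $U_B$ are disjoint, a set in both $\mathbf{ID}(\mathcal{A})$ and $\mathbf{ID}(\mathcal{B})$ must be a subset of some $A_i$ and of some $B_j$, hence a subset of $U_A \cap U_B = \emptyset$, so the only common set is $\emptyset$. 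By inclusion–exclusion this contributes $m + (k+1) - 1 = m + k$ sets. But wait — the sets with $z \in T$ already account for $m$ of the total, and the sets without $z$ drawn from $\mathbf{ID}(\mathcal{A})$ are a subset of those counted in the $z \notin T$ case, so I must be careful not to double count. The clean way: every $T \in \mathbf{ID}(\mathcal{S})$ is in bijection with a pair (indicator of $z$, rest), and the rest ranges over $\mathbf{ID}(\mathcal{A})$ when $z$ is present and over $\mathbf{ID}(\mathcal{A}) \cup \mathbf{ID}(\mathcal{B})$ when $z$ is absent — but the $\mathbf{ID}(\mathcal{A})$ part with $z$ absent is redundant with the $z$-present part only in the sense of being "the same rest," not the same $T$. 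So actually $|\mathbf{ID}(\mathcal{S})| = m + |\mathbf{ID}(\mathcal{A}) \cup \mathbf{ID}(\mathcal{B})| = m + (m + k + 1 - 1)$, which overshoots. The fix is to not lift all of $\mathcal{A}$ but to use a different gluing: replace one generator, say take $\mathcal{S} = (\mathcal{A} + \{z\}) \cup \mathcal{B}$ but choose $\mathcal{B}$ to be an antichain for $k+1$ rather than $k$ precisely so that the single overcounted set $\emptyset$ is absorbed — i.e., the $z$-absent sets contribute $\mathbf{ID}(\mathcal{B})$ only if $\mathbf{ID}(\mathcal{A}) \subseteq \mathbf{ID}(\mathcal{B})$, which generally fails. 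The robust approach is to instead put $\mathcal{A}$ on top of a single fixed element and $\mathcal{B}$ disjoint: $\mathbf{ID}((\mathcal{A}+\{z\}) \cup \mathcal{B})$ where we additionally require that $\emptyset$ (the only set in $\mathbf{ID}(\mathcal{A})\cap\mathbf{ID}(\mathcal{B})$ with $z$ absent) is the one shared set, giving $|\mathbf{ID}(\mathcal{S})| = \underbrace{m}_{z\in T} + \underbrace{(k+1)}_{z\notin T,\ T\in\mathbf{ID}(\mathcal B)} $, and the $z$-absent sets coming from $\mathbf{ID}(\mathcal{A})$ are all already $\subseteq$ some $B_j$? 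No. So the genuinely correct bookkeeping is: $z \notin T$ sets are exactly $\mathbf{ID}(\mathcal{A}) \cup \mathbf{ID}(\mathcal{B})$, and we want this to equal $k+1$, which forces $\mathbf{ID}(\mathcal{A}) \subseteq \mathbf{ID}(\mathcal{B})$ — not available in general. Hence the lifting must be applied to $\mathcal{B}$, not $\mathcal{A}$: take $\mathcal{S} = \mathcal{A} \cup (\mathcal{B}' + \{z\})$ where $\mathcal{B}'$ witnesses $\alpha(k+1)$; then $z \in T$ contributes $k+1$, and $z \notin T$ contributes $|\mathbf{ID}(\mathcal{A})| = m$, and these classes are disjoint, giving $|\mathbf{ID}(\mathcal{S})| = m + k + 1$ — still off by one.

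So the actual construction must have the lifted part contribute $k$, not $k+1$: we want $z$-present sets to number $k$ and $z$-absent sets to number $m$. That means lifting an antichain for $k$ — but the statement uses $\alpha(k+1)$, so the extra $+1$ in $\alpha(k+1)$ must pay for the element $z$ itself. The resolution I would carry out: let $\mathcal{B}'$ be an antichain on universe $U_B$ with $|\mathbf{ID}(\mathcal{B}')| = k+1$; pick any element $w \in U_B$ and note $\mathbf{ID}(\mathcal{B}')$ decomposes as (sets containing $w$) $\sqcup$ (sets not containing $w$); actually the cleanest is: since $|\mathbf{ID}(\mathcal{B}')| = k+1 \ge 2$ (as $k \ge 1$), and $\emptyset \in \mathbf{ID}(\mathcal{B}')$, consider $\mathcal{S} = \mathcal{A} \cup \{B \cup \{z\} : B \in \mathcal{B}',\ z \notin B\}$ where $z$ is a genuinely new element not in $U_A \cup U_B$ — no, this changes the ideal size unpredictably. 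I will instead use the standard trick: take $\mathcal{S} = \mathcal{A} \cup \mathcal{B}''$ where $\mathcal{B}'' = \{ B \cup \{z\} : B \in \mathcal{B}' \}$ for a fresh $z$, with $\mathcal{B}'$ witnessing $\alpha(k+1)$ on $U_B$ disjoint from $U_A$. Then $\mathbf{ID}(\mathcal{B}'') = \{ B \cup \{z\} : B \in \mathbf{ID}(\mathcal{B}')\} \cup \mathbf{ID}(\mathcal{B}')$ — because $T \subseteq B \cup \{z\}$ iff $T \setminus \{z\} \subseteq B$. That has $2(k+1)$ elements minus overlap; the overlap of $\mathbf{ID}(\mathcal{A})$ with this is $\mathbf{ID}(\mathcal{A}) \cap \mathbf{ID}(\mathcal{B}') = \{\emptyset\}$ (disjoint universes) plus nothing with $z$. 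This is getting circular, and that tells me the real content: the main obstacle is pinning down the exact gluing so that the shared element $\emptyset$ and the auxiliary element $z$ together account for the discrepancy between $m + k$ and $m + (k+1)$. Once the right construction is identified — I am fairly confident it is $\mathcal{S} = \mathcal{A} \,\cup\, \{B \cup \{z\} : B \in \mathcal{B}'\}$ with $z$ fresh, $U_A \cap U_B = \emptyset$, $|\mathbf{ID}(\mathcal{B}')| = k+1$, and then checking $|\mathbf{ID}(\mathcal{S})| = |\mathbf{ID}(\mathcal{A})| + |\mathbf{ID}(\mathcal{B}')| - 1 = m + k$, the "$-1$" coming from $\emptyset$ being counted in both pieces while the $z$-lift of $\mathcal{B}'$ reuses $\mathbf{ID}(\mathcal{B}')$'s ground sets without $z$ — the bound $|\mathcal{S}| \le \alpha(m) + \alpha(k+1)$ is immediate. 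The verification of the cardinality identity via Observation~\ref{inclexclpowset}, tracking intersections $A_i \cap (B_j \cup \{z\}) = A_i \cap B_j = \emptyset$, is the step I expect to require the most care, and it is where I would spend the bulk of the write-up.
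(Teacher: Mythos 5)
Your proposal does not converge on a correct construction. The family you finally commit to, $\mathcal{S} = \mathcal{A} \cup \{B \cup \{z\} : B \in \mathcal{B}'\}$ with $z$ fresh, does not generate an ideal of size $m+k$: adjoining a fresh element $z$ to every generator of $\mathcal{B}'$ \emph{doubles} the size of its ideal (this is exactly the paper's Lifting Lemma with $t=1$; equivalently, as you yourself observe, $\mathbf{ID}(\{B\cup\{z\} : B \in \mathcal{B}'\}) = \mathbf{ID}(\mathcal{B}') \sqcup (\mathbf{ID}(\mathcal{B}')+\{z\})$ has $2(k+1)$ elements). Since the only set common to $\mathbf{ID}(\mathcal{A})$ and this lifted ideal is $\emptyset$, your $\mathcal{S}$ yields $|\mathbf{ID}(\mathcal{S})| = m + 2(k+1) - 1 = m + 2k + 1$, not $m+k$. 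None of the other variants you entertain is carried to completion either, and the write-up ends by deferring the ``bookkeeping'' that is in fact the entire content of the lemma.

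The fix is that no auxiliary element $z$ is needed at all. Take $\mathcal{A}$ witnessing $\alpha(m)$ and $\mathcal{B}'$ witnessing $\alpha(k+1)$ on disjoint universes and let $\mathcal{S} = \mathcal{A} \cup \mathcal{B}'$ be the plain union. Then $\mathbf{ID}(\mathcal{S}) = \mathbf{ID}(\mathcal{A}) \cup \mathbf{ID}(\mathcal{B}')$, and any set lying in both ideals is contained in some $A_i$ and in some $B_j$, hence in $U_A \cap U_B = \emptyset$; so the two ideals meet exactly in $\{\emptyset\}$ and $|\mathbf{ID}(\mathcal{S})| = m + (k+1) - 1 = m+k$, with $|\mathcal{S}| \le \alpha(m) + \alpha(k+1)$. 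This is precisely the paper's proof. Notably, you perform this exact count (``this contributes $m+(k+1)-1 = m+k$ sets'') in passing while analyzing the $z \notin T$ case --- the shared empty set is exactly what the ``$+1$'' in $\alpha(k+1)$ pays for --- but you then abandon it because the extra $z$-containing sets introduced by your lift spoil the total.
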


\begin{proof}
	Let $\mathcal{S},\mathcal{T}$ be family of sets such that $S\cap 
	T=\emptyset$ for all $S\in \mathcal{S}$ and $T\in \mathcal{T}$, where 
	$$|\mathcal{S}|=\alpha(m) \And |\mathbf{ID}(\mathcal{S})|=m$$
	$$|\mathcal{T}|=\alpha(k+1) \And |\mathbf{ID}(\mathcal{T})|=k+1$$
	
	Observe that, by construction $\mathbf{ID}(\mathcal{S}) \cap 
	\mathbf{ID}(\mathcal{T})=\{\emptyset\}$. Therefore, 
	$$|\mathbf{ID}(\mathcal{S} \cup \mathcal{T})| = |\mathbf{ID}(\mathcal{S})|+ 
	|\mathbf{ID}(\mathcal{T})|-1=m+k$$
	
	It now follows that, 
	$$\alpha(m+k) \leq |\mathcal{S} \cup \mathcal{T}|=|\mathcal{S}|+  
	|\mathcal{T}| = \alpha(m)+\alpha(k+1) $$
\end{proof}

The splitting lemma essentially tells us that to construct an ideal of size 
$m+k$, we can combine ideals of sizes $m$ and $k+1$ that share only the empty 
set. This allows us to decompose the problem of constructing larger ideals 
into constructing smaller ones.

Our second fundamental operation is the lifting lemma, which provides an 
efficient way to construct ideals whose cardinality is a power of 2 
multiplied by a given number:

\begin{lemma}[\textit{Lifting lemma}]\label{lift}
	For every $t,k \in \mathbb{N}$, $ \alpha(2^t\cdot k) \leq \alpha(k)$    
\end{lemma}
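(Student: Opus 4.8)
The plan is to realize the multiplication by $2^t$ geometrically: take a witness antichain $\mathcal{S}$ for $\alpha(k)$ — that is, a family with $|\mathcal{S}| = \alpha(k)$ and $|\mathbf{ID}(\mathcal{S})| = k$ — and enlarge every set in it by a fixed block of $t$ brand-new elements that appear in no set of $\mathcal{S}$. Concretely, using the notation of the paper, let $W = [t]_\star$ be $t$ fresh elements disjoint from $\bigcup_{S \in \mathcal{S}} S$, and consider $\mathcal{S}' = \mathcal{S} + W = \{S \cup W : S \in \mathcal{S}\}$. Clearly $|\mathcal{S}'| \le |\mathcal{S}| = \alpha(k)$ (it could only shrink if two sets collide, which they cannot here), so it suffices to show $|\mathbf{ID}(\mathcal{S}')| = 2^t \cdot k$.

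For the cardinality computation, the key observation is that $2^{S \cup W} = 2^S \times 2^W$ in the sense that every subset of $S \cup W$ splits uniquely as $A \sqcup B$ with $A \subseteq S$ and $B \subseteq W$, because $S$ and $W$ are disjoint. Hence
\[
\mathbf{ID}(\mathcal{S}') = \bigcup_{S \in \mathcal{S}} 2^{S \cup W} = \left( \bigcup_{S \in \mathcal{S}} 2^S \right) \text{ "times" } 2^W,
\]
and more precisely the map $(A, B) \mapsto A \cup B$ is a bijection from $\mathbf{ID}(\mathcal{S}) \times 2^W$ onto $\mathbf{ID}(\mathcal{S}')$. Injectivity holds because $A = (A\cup B) \cap \left(\bigcup_{S\in\mathcal{S}} S\right)$ and $B = (A \cup B) \cap W$ recover the two pieces; surjectivity holds because any $X \in \mathbf{ID}(\mathcal{S}')$ lies in some $2^{S\cup W}$, whence $X \cap W$ can be anything in $2^W$ and $X \setminus W \subseteq S$ so $X \setminus W \in \mathbf{ID}(\mathcal{S})$. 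Therefore $|\mathbf{ID}(\mathcal{S}')| = |\mathbf{ID}(\mathcal{S})| \cdot |2^W| = k \cdot 2^t$, which gives $\alpha(2^t k) \le |\mathcal{S}'| \le \alpha(k)$.

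I do not anticipate a serious obstacle here: the only things to be careful about are (i) ensuring the fresh elements $W$ are genuinely disjoint from the ground set of $\mathcal{S}$, which is always arrangeable since ideals are defined abstractly and we may relabel, and (ii) checking that adjoining $W$ to every set does not create a repeated set, so that $|\mathcal{S}'| \le |\mathcal{S}|$ — this is immediate because $S \mapsto S \cup W$ is injective on subsets of a fixed ground set disjoint from $W$. If one prefers an algebraic rather than bijective argument, the same conclusion follows directly from \Cref{inclexclpowset}: for every nonempty index set $J$, $|V_{i_1} \cap \cdots \cap V_{i_j}|$ increases by exactly $t$ when each $V_i$ is replaced by $V_i \cup W$, so every term $2^{|\cdots|}$ in the inclusion–exclusion sum is multiplied by $2^t$, and hence so is the total. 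I would present the bijective version as the main proof since it is cleanest, perhaps remarking on the inclusion–exclusion alternative.
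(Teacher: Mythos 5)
Your construction is exactly the paper's: adjoin a fixed block of $t$ fresh elements, disjoint from the ground set, to every generator of a witness family for $\alpha(k)$. The only difference is in verifying $|\mathbf{ID}(\mathcal{S}')| = 2^t\cdot k$: the paper runs the inclusion--exclusion identity of \cref{inclexclpowset} (precisely the alternative you sketch at the end), whereas your main argument uses the product bijection $\mathbf{ID}(\mathcal{S})\times 2^{W}\to \mathbf{ID}(\mathcal{S}')$, which is correct and essentially the same proof in a more direct form.
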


\begin{proof}
	Let $\mathcal{S}=\{S_1,S_2\cdots S_{\alpha(k)}\}$ be a family of sets such 
	that $ |\mathbf{ID}(\mathcal{S})|=k$. Let the set $X=\{x_1,x_2\cdots x_t\}$ 
	be such that $ X \cap S_i = \emptyset$ for all $S_i\in \mathcal{S}$. We 
	define $S_i'=S_i \sqcup X$ for all $i \in [\alpha(k)]$ and 
	$\mathcal{S'}=\{S_1',S_2'\cdots S_{\alpha(k)}'\}$. By 
	\cref{inclexclpowset},
	
	$$|\mathbf{ID}(\mathcal{S}')|=|\bigcup_{i=1}^{\alpha(k)} 2^{S_i'} | = 
	\sum_{j=1}^{\alpha(k)}(-1)^{j+1} ( \sum_{1 \leq i_1 < \cdots < i_j \leq 
		\alpha(k)} 2^{|S_{i_1}'\cap \cdots \cap S_{i_j}'|} )$$
	
	By construction,
	$$=\sum_{j=1}^{\alpha(k)}(-1)^{j+1} ( \sum_{1 \leq i_1 < \cdots < i_j 
		\leq \alpha(k)} 2^{|(S_{i_1}\cap \cdots \cap S_{i_j})\sqcup X|} )$$
	$$=2^{|X|}\sum_{j=1}^{\alpha(k)}(-1)^{j+1} ( \sum_{1 \leq i_1 < \cdots < 
		i_j \leq \alpha(k)} 2^{|S_{i_1}\cap \cdots \cap S_{i_j}|} )$$
	
	By \cref{inclexclpowset},
	$$2^{|X|}|\bigcup_{i=1}^{\alpha(k)} 2^{S_i} |=2^t|\mathbf{ID}(\mathcal{S})|
	=2^t\cdot k$$
	
	Therefore, $\alpha(2^t \cdot k)\leq |\mathcal{S}'| = |\mathcal{S}|=\alpha(k)$
\end{proof}

The lifting lemma provides a powerful tool: it shows that we can construct an 
ideal of size $2^t \cdot k$ using the same number of generators as an ideal 
of size $k$. Intuitively, this is achieved by adding $t$ new elements to each 
generator set in a way that preserves the relative structure of the original 
ideal.

\subsection{Simple Upper Bound Based on Block Count}

Using the operations we've developed, we can now establish a simple 
relationship between $\alpha(k)$ and the block count of $k$:

\begin{lemma}\label{simple_upper_bound}
	For every $k\in \mathbb{N}$, $\alpha(k)\leq {bl(k)}+1$    
\end{lemma}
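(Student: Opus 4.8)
The plan is to induct on the block count $b = bl(k)$, using the splitting and lifting lemmas to peel off blocks of the block binary representation one at a time. Write the block binary representation of $k$ as $\mathbf{1}_{q_b}\mathbf{0}_{l_b}\cdots\mathbf{1}_{q_1}\mathbf{0}_{l_1}$. The high-order block $\mathbf{1}_{q_b}$ contributes a number of the form $(2^{q_b}-1)$, suitably shifted; in general $k$ decomposes as a sum over blocks where the $i$-th block of $q_i$ ones, placed at its bit offset, contributes $2^{\,(\text{offset}_i)}\cdot(2^{q_i}-1)$. The key observation is that $\alpha(2^{q}-1) \le 2$: the family $\mathcal{S} = \{[q-1],\ \{q\}\}$ (two sets sharing nothing but needing a tiny fix — actually take $\mathcal{S}=\{\,[q]\setminus\{1\},\ [q]\setminus\{2\}\,\}$ or more cleanly observe $2^{q}-1 = 2^{q-1} + (2^{q-1}-1)$ and recurse), so a single block costs a bounded number of generators, and shifting it by a power of two is free by \cref{lift}.

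Concretely, first I would prove the base fact that $\alpha(k) \le 2$ whenever $k = 2^a + 2^b$ or, more to the point, set up the induction so that at each step we use \cref{split} in the form $\alpha(m+k) \le \alpha(m) + \alpha(k+1)$ to split off the lowest block. Suppose $k$ has block binary representation ending in $\mathbf{1}_{q_1}\mathbf{0}_{l_1}$. Then $k = k' + r$ where $r = (2^{q_1}-1)\cdot 2^{l_1}$ is the contribution of the lowest block and $k'$ is $k$ with that block zeroed out; note $bl(k') = b-1$. By the splitting lemma, $\alpha(k) \le \alpha(k') + \alpha(r+1)$. Now $r + 1 = (2^{q_1}-1)2^{l_1} + 1$; one wants $\alpha(r+1)$ to be a small constant. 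Since $r+1$ need not be a clean power of two, I would instead split more carefully — e.g. write $k = k'' + 2^{j}$ where $2^j$ is just the highest set bit, giving $\alpha(k) \le \alpha(k'') + \alpha(2^j + 1) = \alpha(k'') + 2$ (as $\alpha(2^j+1) \le 2$ via the family $\{[j],\emptyset\}$... wait, $|\mathbf{ID}(\{[j],\emptyset\})| = 2^j$, so rather $\{[j], \{j+1\}\}$ gives $2^j + 2 - 1 = 2^j+1$), but peeling one bit at a time gives only $\alpha(k) \le (\text{number of ones in }k) + O(1)$, which is too weak. The correct move is to peel a whole block: split $k = k' + (2^{q_1+l_1} - 2^{l_1})$, and handle the term $2^{q_1+l_1} - 2^{l_1} + 1$. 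By \cref{lift} it suffices to bound $\alpha$ of its odd part, and $2^{q_1+l_1}-2^{l_1}+1$ — hmm, this is where care is needed; more cleanly, $2^{q_1+l_1} - 2^{l_1} = 2^{l_1}(2^{q_1}-1)$, and $\alpha(2^{l_1}(2^{q_1}-1)) \le \alpha(2^{q_1}-1) \le 2$ by induction/lifting, but the $+1$ in the splitting lemma is the nuisance.

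The cleanest route, which I would adopt: induct on $b$ showing $\alpha(k) \le b+1$. For $b=1$, $k = 2^{l}(2^q - 1)$ for some $l \ge 0, q \ge 1$; by \cref{lift} it suffices to show $\alpha(2^q-1) \le 2$, and indeed $2^q - 1 = |\mathbf{ID}(\{[q-1]\sqcup\{q\}\})| $ minus one... more simply $\mathbf{ID}(\{[q]\})$ has size $2^q$, so take $\mathcal{S} = \{[q], \text{(nothing)}\}$ — no: use $\alpha(2^q - 1) \le \alpha(2^{q-1}) + \alpha(2^{q-1}) $? By splitting, $\alpha(2^q-1) = \alpha(2^{q-1} + (2^{q-1}-1)) \le \alpha(2^{q-1}) + \alpha(2^{q-1}) = 1 + 1 = 2$ since $2^{q-1}-1 + 1 = 2^{q-1}$ and $\alpha(2^{q-1}) = 1$. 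For the inductive step, given $k$ with $bl(k) = b \ge 2$ and lowest block contributing $r = 2^{l_1}(2^{q_1}-1)$, write $k = k' + r$ with $bl(k') = b - 1$, so $\alpha(k) \le \alpha(k') + \alpha(r+1) \le b + \alpha(r+1)$. Then $r + 1 = 2^{l_1}(2^{q_1}-1) + 1$; if $l_1 = 0$ this is $2^{q_1}$, so $\alpha(r+1) = 1$ and we get $\alpha(k) \le b+1$. If $l_1 \ge 1$, instead arrange the split so the residual is a power of two: take $k = \tilde k + 2^{l_1}$, where $\tilde k = k - 2^{l_1}$ has block count... this changes the bottom block $\mathbf{1}_{q_1}\mathbf{0}_{l_1}$ into $\mathbf{1}_{q_1-1}\mathbf{0}_{1}\mathbf{1}_{1}\mathbf{0}_{l_1-1}$ or similar — not obviously $b-1$. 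So the main obstacle is precisely handling the $+1$ from the splitting lemma when the low block is not flush with bit $0$. The fix: split at the block boundary using $k = k' + (2^{\ell} - 2^{l_1})$ where $\ell = q_1 + l_1$; then $\alpha(k) \le \alpha(k') + \alpha(2^\ell - 2^{l_1} + 1)$, and $2^\ell - 2^{l_1} + 1 = 2^{l_1}(2^{q_1} - 1) + 1$; when $l_1 \ge 1$ this number has block representation $\mathbf{1}_{q_1}\mathbf{0}_{l_1-1}\mathbf{1}_1$, i.e.\ block count $2$, so by the $b=1$... no, by a $b=2$ sub-case $\alpha(2^\ell-2^{l_1}+1) \le 3$, giving only $\alpha(k)\le b+2$. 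Hence I expect the real argument chooses the \emph{order} of peeling so that the $+1$ always lands inside an existing run of ones (making it disappear into a carry), or peels the \emph{top} block first where the $+1$ is harmless; that bookkeeping — showing some peeling order keeps the running ``$+1$ overhead'' at exactly one, not one per block — is the crux, and is what I would spend the proof on, handling the cases $l_1 = 0$ versus $l_1 > 0$ and $b$ even/odd as needed.
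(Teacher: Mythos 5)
Your overall plan---induct on the block count, peel off the lowest block with the splitting lemma, and use the lifting lemma to handle bit shifts---is exactly the paper's, and your base case is fine (your split $2^q-1 = 2^{q-1}+(2^{q-1}-1)$, giving $\alpha(2^q-1)\le \alpha(2^{q-1})+\alpha(2^{q-1})=2$, is a valid alternative to the paper's explicit two-set construction). But your inductive step is left genuinely unresolved. You correctly identify that the $+1$ in $\alpha(m+k)\le\alpha(m)+\alpha(k+1)$ is harmless only when the lowest block of ones is flush with bit $0$ (i.e., $l_1=0$); for $l_1\ge 1$ your attempts yield only $\alpha(k)\le b+2$, and you end by conceding that the ``bookkeeping'' of peeling orders is the unproven crux. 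That crux is the entire content of the lemma, so as written the proof is incomplete.

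The missing move is one you already made in your base case: apply the lifting lemma \emph{first} in the inductive step, before splitting. Since $k = 2^{l_1}\cdot k_0$ where $k_0 = \mathbf{1}_{q_b}\mathbf{0}_{l_b}\cdots\mathbf{1}_{q_2}\mathbf{0}_{l_2}\mathbf{1}_{q_1}$, \cref{lift} gives $\alpha(k)\le\alpha(k_0)$, and the lowest block of $k_0$ now contributes exactly $2^{q_1}-1$. Splitting $k_0 = k' + (2^{q_1}-1)$, where $k' = \mathbf{1}_{q_b}\mathbf{0}_{l_b}\cdots\mathbf{1}_{q_2}\mathbf{0}_{l_2+q_1}$ has $b-1$ blocks, yields $\alpha(k_0)\le\alpha(k') + \alpha(2^{q_1}) \le \bigl((b-1)+1\bigr) + 1 = b+1$ by the induction hypothesis. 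No case analysis on $l_1$, no parity of $b$, and no clever choice of peeling order is needed: the $+1$ from each application of \cref{split} is absorbed into the power of two $2^{q_1}$, exactly once per block.
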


\begin{proof}
	We prove this by induction on the block count of $k$. We first give a 
	construction for the base case, that is, for any $k$ with $bl(k)=1$, say 
	$k=\mathbf{1}_{q_1} \mathbf{0}_{l_1}$, where $q_1\geq 1$ and $l_1\geq 0$. 
	Consider the sets $S_1=[q_1+l_1-1]_1$ and $S_2=[q_1-1]_2\sqcup [l_1]_1$. 
	Note that $S_1 \bigcap S_2 = [l_1]_1$. We now observe that,
	$$ |\mathbf{ID}(\{S_1,S_2\})| = 2^{|S_1|}+2^{|S_2|}-2^{|S_1\bigcap S_2|} = 
	2^{q_1+l_1}-2^{l_1} = \mathbf{1}_{q_1} \mathbf{0}_{l_1}$$
	
	Thus for $k$, with $bl(k)=1$, $\alpha(k) \leq 2$. We now consider any number 
	$k$ with $bl(k)=b \geq 2$, say $1_{q_b}0_{l_b}\cdots1_{q_2}0_{l_2}1_{q_1}
	0_{l_1}$, where $q_i> 0$ and $l_j> 0$ for all $i\in [b]$ and $j \in [2,b]$. 
	From the \textit{Lifting lemma}, 
	$$\alpha(k) = \alpha (1_{q_b}0_{l_b}\cdots1_{q_2}0_{l_2}1_{q_1}0_{l_1}) 
	\leq \alpha (1_{q_b}0_{l_b}\cdots1_{q_2}0_{l_2}1_{q_1}) = \alpha 
	(1_{q_b}0_{l_b}\cdots1_{q_2}0_{l_2+q_1}+1_{q_1})$$
	
	From the \textit{Splitting lemma}, 
	$$ \leq \alpha (1_{q_b}0_{l_b}\cdots1_{q_2}0_{l_2+q_1})+\alpha(1_{q_1}+1) = 
	\alpha (1_{q_b}0_{l_b}\cdots1_{q_2}0_{l_2+q_1})+\alpha(2^{q_1})$$
	
	From the induction assumption, 
	$$ \leq b+\alpha(2^{q_1}) = b+1=bl(k)+1$$
\end{proof}

This lemma establishes that $\alpha(k)$ grows no faster than the block count 
of $k$ plus one. While this already gives us a non-trivial upper bound, we 
will develop tighter bounds in the next section.

\subsection{Tighter Upper Bound Construction}

We now present our main technical result, which establishes a much tighter 
bound on $\alpha(k)$. The key insight is to construct specialized sets for 
numbers of a particular form and then extend these constructions to all 
natural numbers.

\begin{theorem}
	\label{thm:basecase_recursion}
	For $m$ of the form $2^{3q^2}+\beta$ where $\beta < 2^{q^2}$, $\alpha(m) 
	\leq (q+1) \lceil \log{q} \rceil+4q+6$
\end{theorem}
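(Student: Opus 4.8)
The plan is to build an antichain for $m = 2^{3q^2}+\beta$ by writing $m$ as $2^{3q^2}$ plus a small correction term $\beta < 2^{q^2}$, and to spend most of the generator budget on realizing the leading power of two cheaply while using the block-count bound of \Cref{simple_upper_bound} (together with the \emph{Lifting lemma}) to absorb $\beta$. The key asymmetry to exploit is that $2^{3q^2}$ is an enormous number whose ``naive'' block count is $1$, but the \emph{Splitting lemma} alone is too weak: splitting $m = 2^{3q^2} + \beta$ into an ideal of size $2^{3q^2}$ and one of size $\beta+1$ costs $\alpha(2^{3q^2}) + \alpha(\beta+1)$, and the second summand could be as large as $\Theta(q^2)$ by \Cref{simple_upper_bound} since $\beta$ may have up to $\Theta(q^2)$ blocks. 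So the real work is a smarter, recursive decomposition of the correction term that pays only $O(q\log q)$ rather than $O(q^2)$.

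First I would set up the recursion on the ``scale'' of the number. The idea is to chop the binary string of $m$ into $\Theta(q)$ contiguous windows each of width $\Theta(q)$ bits (so the total width $\Theta(3q^2)$ matches the magnitude $2^{3q^2}$), and to express $m$ as a sum $\sum_j 2^{a_j} m_j$ where each $m_j$ is a $\Theta(q)$-bit number — hence has block count $O(q)$ — living in its own window at offset $a_j$. Naively combining these via repeated \emph{Splitting} and \emph{Lifting} would cost $\sum_j \alpha(m_j+1) = O(q)\cdot O(q) = O(q^2)$, which is again too much. The trick to get down to $O(q\log q)$ is to recurse on the \emph{list of windows} in a balanced (divide-and-conquer) fashion: group the $\Theta(q)$ windows into $\Theta(\sqrt q)$ superblocks of $\Theta(\sqrt q)$ windows each, handle each superblock with the same construction at one smaller scale, and glue the superblocks together — the depth of this recursion is $\Theta(\log q)$ and at each level the \emph{per-level} overhead is $O(q)$ generators for the ``boundary'' sets needed to stitch adjacent pieces, giving the claimed $(q+1)\lceil\log q\rceil + O(q)$.

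Concretely, the building block I would isolate is a lemma of the form: given a list of $r$ ``bricks'' $b_1,\dots,b_r$, each a number of block count at most $c$, one can realize $\sum_i 2^{(i-1)w} b_i$ (the bricks laid out in disjoint $w$-bit windows) with at most $c\lceil\log r\rceil + O(r)$ generators, provided $w$ exceeds the bit-length of each brick by enough slack. The proof of this brick lemma is itself a divide-and-conquer argument: split the list at the midpoint, recursively realize the low half and the high half, then use the \emph{Lifting lemma} to shift the high half into its window (costing nothing) and a single application of the \emph{Splitting lemma} to merge (costing one extra generator plus the $+1$ adjustment, which by a careful choice of the split point contributes only to the $O(r)$ term, not the multiplicative $c$). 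Unrolling with $r = \Theta(q)$, $c = O(q)$ bricks obtained by \Cref{simple_upper_bound} applied to $\Theta(q)$-bit chunks, and bounding the recursion tree gives the theorem; the leading $2^{3q^2}$ term is just the top brick, and $\beta<2^{q^2}$ guarantees it occupies a clean window disjoint from the correction bits so that the slack hypothesis of the brick lemma holds.

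The main obstacle I expect is controlling the additive error terms through the recursion. Every use of the \emph{Splitting lemma} replaces a target $a+b$ by $\alpha(a)+\alpha(b+1)$, and those stray ``$+1$''s propagate: if one is not careful they can themselves accumulate block count and blow up a later splitting step, turning the clean $c\lceil\log r\rceil$ into something like $c\log r + (\text{junk})\cdot\log r$. Taming this requires choosing, at each recursive split, the cut point and the window offsets so that the ``$+1$'' always lands in a position where it either collapses harmlessly (e.g., it hits a long run of $0$s and just flips one bit, keeping block count $O(1)$) or is deferred into a dedicated low-order scratch window that is cleaned up once at the very end. Getting the bookkeeping of these offsets and slack margins exactly right — so that the final count is genuinely $(q+1)\lceil\log q\rceil + 4q + 6$ and not merely $O(q\log q)$ — is the delicate part, and I would expect the bulk of the written proof to be precisely this accounting.
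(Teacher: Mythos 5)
There is a genuine gap, and it is in the central step: your ``brick lemma'' cannot be proved by the divide-and-conquer you describe, and without it the whole plan collapses back to the $O(q^2)$ bound you correctly identified as too weak. The issue is that the \emph{Splitting lemma} (\Cref{split}) is purely additive in the number of generators: to realize $a+b$ you pay $\alpha(a)+\alpha(b+1)$, so if $T(r)$ is the cost of assembling $r$ bricks, your balanced recursion gives $T(r)=2T(r/2)+O(1)$ with $T(1)=c+1$, which solves to $\Theta(r\,c)$, not $c\lceil\log r\rceil+O(r)$. There is no mechanism in the splitting/lifting toolkit by which the two halves of the recursion share generators, so the shape of the recursion tree is irrelevant — the total is always at least the sum of the leaf costs, i.e.\ $\Theta(q)\cdot\Theta(q)=\Theta(q^2)$ for your choice of $r=\Theta(q)$ bricks of block count $c=\Theta(q)$. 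Your ``per-level overhead'' accounting conflates the merge cost (which is indeed $O(1)$ per merge) with the leaf costs (which do not telescope away). The $\lceil\log q\rceil$ factor in the target bound does not arise from any recursion depth.

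What the paper actually does, and what is genuinely missing from your proposal, is a non-additive encoding trick: the $q^2$ bits of $\beta$ are arranged in a $q\times q$ grid, and one builds $q$ ``row'' sets $S_i$ and $q$ ``column'' sets $T_j$ so that each pairwise intersection $S_i\cap T_j$ carries one bit (via the sets $F_{ij}$). Inclusion--exclusion over these $2q$ generators then produces all $q^2$ terms $2^{jq+|F_{ij}|}$ at once — $2q$ sets buy $q^2$ controlled pairwise intersections, which is the multiplicative saving that splitting and lifting alone can never provide. The residue of this construction is a number of the form $2^{3q^2-1}-\sum_{j=0}^{q}a_j2^{jq}$ with $|a_j|\le q-1$; the $(q+1)\lceil\log q\rceil$ term in the bound comes from writing each coefficient $a_j$ in binary and applying the block-count bound to this residue, not from a logarithmic-depth recursion. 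Your opening diagnosis of why naive splitting fails is exactly right, but the fix requires this grid/intersection construction rather than a rebalanced application of the same two lemmas.
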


The proof of this theorem is technical and will be presented in 
\Cref{subsec:proofthm11}. First, we'll show how this theorem helps us 
establish our main result, \Cref{thm:sqrt_bnd}.

\begin{observation}\label{sim_ind_obs}
	For every $k\in \mathbb{N}$, there exists a $q$ such that
	$k=2^{3q^2}+\gamma \cdot 2^{q^2}+\beta < 2^{3(q+1)^2}$ where $\gamma = 
	\lfloor \dfrac{k-2^{3q^2}}{2^{q^2}} \rfloor$ and $0 \leq \beta < 2^{q^2}$
\end{observation}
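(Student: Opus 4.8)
The plan is to reduce the statement to the claim that the half-open intervals $I_q := [\,2^{3q^2},\, 2^{3(q+1)^2}\,)$, for $q = 0,1,2,\dots$, partition $\mathbb{N}=\{1,2,\dots\}$, and then to read off $\gamma$ and $\beta$ from Euclidean division.

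First I would note that $q \mapsto 3q^2$ is strictly increasing on $\{0,1,2,\dots\}$ and equals $0$ at $q=0$, so $q\mapsto 2^{3q^2}$ is a strictly increasing sequence of positive integers starting at $2^0=1$ and tending to infinity; hence the intervals $I_q$ are pairwise disjoint and their union is all of $\mathbb{N}$. Concretely, given $k\in\mathbb{N}$ I would take $q=\big\lfloor \sqrt{(\log_2 k)/3}\,\big\rfloor$. Then $q\le\sqrt{(\log_2 k)/3}$ yields $3q^2\le \log_2 k$, so $2^{3q^2}\le k$; and $q+1>\sqrt{(\log_2 k)/3}\ge 0$ yields $3(q+1)^2>\log_2 k$, so $k<2^{3(q+1)^2}$. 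Thus $k\in I_q$.

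With this $q$ fixed, I would set $r := k-2^{3q^2}\ge 0$ and divide by $2^{q^2}$: write $r=\gamma\cdot 2^{q^2}+\beta$ with $\gamma=\lfloor r/2^{q^2}\rfloor=\big\lfloor (k-2^{3q^2})/2^{q^2}\big\rfloor$ and $0\le\beta<2^{q^2}$. Substituting back gives $k=2^{3q^2}+\gamma\cdot 2^{q^2}+\beta$ with exactly the stated formula for $\gamma$ and the stated range for $\beta$, and we already have $k<2^{3(q+1)^2}$, which is the full claim.

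There is essentially no obstacle here; the only mild care needed is the boundary case $k=1$, where $q=0$ and $2^{q^2}=1$, forcing $\beta=0$ and $\gamma=0$, so one should confirm this fits the formulas (it does). I also note that $q\ge 1$ holds automatically whenever $k\ge 8=2^{3}$, which is convenient when this observation is combined with \Cref{thm:basecase_recursion}, since the handful of values $k\in\{1,\dots,7\}$ can be handled separately wherever it is invoked.
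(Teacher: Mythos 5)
Your proposal is correct and follows the same route as the paper: locate $k$ in the interval $[2^{3q^2}, 2^{3(q+1)^2})$ and then apply Euclidean division of $k-2^{3q^2}$ by $2^{q^2}$ to extract $\gamma$ and $\beta$. The explicit formula $q=\lfloor\sqrt{(\log_2 k)/3}\rfloor$ and the check of $k=1$ are just more detailed versions of what the paper leaves implicit.
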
  

\begin{proof}
	For every number $k$, there exists a $q$ such that $2^{3q^2}\leq k < 
	2^{3(q+1)^2}$. By dividing $k-2^{3q^2}$ by $2^{q^2}$, it follows that 
	$\gamma = \lfloor \dfrac{k-2^{3q^2}}{2^{q^2}} \rfloor$ is the quotient and 
	$0 \leq \beta < 2^{q^2}$ is the remainder. 
\end{proof}

This observation shows that any natural number can be decomposed into the 
form required by \Cref{thm:basecase_recursion}, plus an additional term. 
We now show how to handle this additional term:

\begin{lemma} \label{break_induction}
	$\alpha(2^{3q^2}+\gamma \cdot 2^{q^2}+\beta)\leq \alpha(2^{3q^2}+\beta)+ 
	\alpha(\gamma)+1$    
\end{lemma}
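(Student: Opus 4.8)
The plan is to prove the identity
\[
\alpha(2^{3q^2}+\gamma\cdot 2^{q^2}+\beta)\ \le\ \alpha(2^{3q^2}+\beta)+\alpha(\gamma)+1
\]
by peeling off the middle block $\gamma\cdot 2^{q^2}$ using the \emph{Splitting lemma} (\Cref{split}) together with the \emph{Lifting lemma} (\Cref{lift}). The key algebraic observation is that $2^{3q^2}+\gamma\cdot 2^{q^2}+\beta$ can be rewritten as a sum of two pieces, one of which is $2^{q^2}\cdot\gamma$ (a power of two times $\gamma$) and the other of which is essentially $2^{3q^2}+\beta$, so that applying Splitting and then Lifting to the respective summands yields exactly the claimed bound.

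Concretely, first I would write $m=2^{3q^2}+\gamma\cdot 2^{q^2}+\beta$ and split it as $m = (2^{q^2}\cdot\gamma) + (2^{3q^2}+\beta)$. To invoke the Splitting lemma $\alpha(a+b)\le\alpha(a)+\alpha(b+1)$, I set $a = 2^{q^2}\cdot\gamma$ and $b+1 = 2^{3q^2}+\beta$, i.e.\ $b = 2^{3q^2}+\beta-1$; this gives $a+b = 2^{q^2}\cdot\gamma + 2^{3q^2}+\beta-1$, which is one short of $m$, so I should instead split as $m = (2^{q^2}\cdot\gamma + 1) + (2^{3q^2}+\beta - 1)$. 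Hmm — the cleanest route is: apply Splitting with the first summand equal to $2^{3q^2}+\beta$ and the second chosen so that the "$+1$" is absorbed. Precisely, $\alpha(m)=\alpha\big((2^{3q^2}+\beta)+\gamma\cdot 2^{q^2}\big)\le \alpha(2^{3q^2}+\beta)+\alpha(\gamma\cdot 2^{q^2}+1)$ by \Cref{split}, and the remaining task is to bound $\alpha(\gamma\cdot 2^{q^2}+1)$.

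The bound $\alpha(\gamma\cdot 2^{q^2}+1)\le \alpha(\gamma)+1$ is where the "$+1$" in the statement comes from, and this is the step I expect to require the most care. One natural approach: apply the Splitting lemma again to separate the "$+1$", writing $\gamma\cdot 2^{q^2}+1 = (\gamma\cdot 2^{q^2}) + 1$ and using $\alpha(a+1)\le\alpha(a)+\alpha(2) = \alpha(a)+1$ (since $\alpha(2)=1$, realized by a single singleton set). Then $\alpha(\gamma\cdot 2^{q^2})\le\alpha(\gamma)$ by the \emph{Lifting lemma} (\Cref{lift}) with $t=q^2$. Combining, $\alpha(\gamma\cdot 2^{q^2}+1)\le\alpha(\gamma\cdot 2^{q^2})+1\le\alpha(\gamma)+1$, and plugging back in gives $\alpha(m)\le\alpha(2^{3q^2}+\beta)+\alpha(\gamma)+1$ as desired. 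I would double-check the base/edge cases: when $\gamma=0$ the claim is trivial ($\alpha(2^{3q^2}+\beta)\le\alpha(2^{3q^2}+\beta)+\alpha(0)+1$, reading $\alpha(0)=0$ or simply noting the left side equals a term on the right), and I would confirm that the Splitting lemma's hypotheses ($m,k\in\mathbb{N}$) are met in each invocation, in particular that $2^{3q^2}+\beta\ge 1$ and $\gamma\cdot 2^{q^2}\ge 1$ whenever $\gamma\ge 1$. The main obstacle is purely bookkeeping: making sure the off-by-one shifts built into the Splitting lemma line up so that exactly one extra "$+1$" survives, rather than two.
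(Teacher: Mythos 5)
Your proposal is correct and follows essentially the same route as the paper: split off $\gamma\cdot 2^{q^2}$ via the Splitting lemma to get $\alpha(2^{3q^2}+\beta)+\alpha(\gamma\cdot 2^{q^2}+1)$, split again to absorb the $+1$ using $\alpha(2)=1$, and finish with the Lifting lemma to reduce $\alpha(\gamma\cdot 2^{q^2})$ to $\alpha(\gamma)$. The edge-case checks (e.g.\ $\gamma\ge 1$ for the Splitting lemma's hypotheses) are a reasonable extra precaution, consistent with the fact that the paper only invokes this lemma when $\gamma>0$.
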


\begin{proof}
	From the Splitting lemma,   
	$$\alpha(2^{3q^2}+\gamma \cdot 2^{q^2}+\beta)\leq  \alpha(2^{3q^2}+\beta)+
	\alpha(\gamma\cdot 2^{q^2}+1)$$
	$$\leq \alpha(2^{3q^2}+\beta)+\alpha(\gamma\cdot 2^{q^2}) + \alpha(2)= 
	\alpha(2^{3q^2}+\beta)+\alpha(\gamma\cdot 2^{q^2}) + 1$$
	
	From the Lifting lemma, 
	$$=\alpha(2^{3q^2}+\beta)+ \alpha(\gamma)+1$$
\end{proof} 

With these pieces in place, we are now ready to prove our main result, which 
provides an $O(\sqrt{\log k}\log\log k)$ upper bound on $\alpha(k)$.

\begin{proof}[Proof of \Cref{thm:sqrt_bnd}]
	From \Cref{sim_ind_obs}, for every $k$, there exists a $q$ such that
	$k=2^{3q^2}+\gamma \cdot 2^{q^2}+\beta < 2^{3(q+1)^2}$ where $\gamma = 
	\lfloor \dfrac{k-2^{3q^2}}{2^{q^2}} \rfloor$ and $0 \leq \beta < 2^{q^2}$. 
	
	When $3 \le k < 20$, it is obvious that $\alpha(k) \le k < 20\sqrt{\log{k}}
	\log\log{k}$. So, we can assume that $20 \le k$ and therefore $q \ge 1$.
\begin{observation}\label{calc_der}
If $\log{3} \leq \log{k} \leq 
	30000$, then $\lceil 0.5\log{k}+1 \rceil < 20\sqrt{\log{k}}
	\log\log{k}$.   
\end{observation}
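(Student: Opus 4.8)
The plan is to treat this as an elementary one–variable estimate. Write $x := \log{k}$, so the hypothesis becomes $x \in [\log{3},\, 30000]$; in particular $x > 1$, so $\log\log{k} = \log{x}$ is positive and well defined. Since $\lceil y \rceil < y+1$ for every real $y$, we have $\lceil 0.5x + 1\rceil < 0.5x + 2$, so it suffices to prove the cleaner inequality
$$ 0.5\,x + 2 \;\le\; 20\sqrt{x}\,\log{x} \qquad\text{for all } x \in [\log{3},\, 30000].$$
Substituting $t := \sqrt{x}$, which ranges over a subset of $[1.25,\, 174]$ (since $173^2 < 30000 < 174^2$ and $\sqrt{\log 3} > 1.25$), this becomes $0.5\,t^2 + 2 \le 40\,t\log{t}$, a comparison between a quadratic and $t\log{t}$. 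Morally it holds with enormous room to spare, because $40t\log t$ beats $0.5t^2$ exactly when $80\log t > t$, and the crossover of $t$ against $80\log t$ lies far beyond $t = 174$.

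To make this rigorous without invoking calculus, I would partition $[\log 3,\, 30000]$ into three pieces, choosing the breakpoints to be powers of two so that the minimum of $\log{x}$ on each piece is an integer, and on each piece bound $\log x$ below by that integer; this reduces each case to a quadratic inequality in $t = \sqrt{x}$. On $[\log 3,\, 16]$ both $\sqrt{x}$ and $\log{x}$ are positive and increasing, so $20\sqrt{x}\log{x} \ge 20\sqrt{\log 3}\,\log(\log 3) > 16$, whereas $0.5x+2 \le 10$. On $[16,\, 2^{10}]$ we have $\log x \ge 4$, so it is enough that $80\sqrt{x} \ge 0.5x + 2$, i.e.\ $0.5t^2 - 80t + 2 \le 0$, which holds for $t \in [4,\, 32]$ since the larger root of that quadratic exceeds $159$. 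On $[2^{10},\, 30000]$ we have $\log x \ge 10$, so it is enough that $200\sqrt{x} \ge 0.5x + 2$, i.e.\ $0.5t^2 - 200t + 2 \le 0$, which holds for $t \in [32,\, 174]$ since the larger root exceeds $399$. Each of these is a finite, routine check, and together they give the displayed bound on the whole range, and hence the claim.

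The only point requiring any care is the realisation that no single constant lower bound on $\log{x}$ can cover the entire interval: for any fixed $c$ the linear term $0.5x$ eventually overtakes $c\sqrt{x}$, so the range must be cut into pieces on each of which $\log{x}$ is bounded below by a large enough constant. Relatedly, the first piece has to be kept short — the crude estimate $20\sqrt{\log 3}\,\log(\log 3)$ only stays above $0.5x+2$ while $x$ remains below roughly $30$ — which is why it ends at $16$. Beyond this bookkeeping there is nothing deeper than evaluating a handful of numbers and locating the roots of two quadratics.
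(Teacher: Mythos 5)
Your reduction is identical to the paper's (set $x=\log k$, use $\lceil y\rceil<y+1$ to pass to $0.5x+2\le 20\sqrt{x}\log x$), but you then diverge: the paper shows $f(x)=0.5x+2-20\sqrt{x}\log x$ has $f''(x)=5\log x/x^{3/2}>0$, so $f$ is convex, and concludes from $f(\log 3)<0$ and $f(30000)<0$ that $f<0$ on the whole interval. You instead avoid calculus entirely by cutting $[\log 3,30000]$ at $16$ and $2^{10}$, bounding $\log x$ below by $\log\log 3$, $4$, and $10$ respectively, and reducing each piece to a quadratic inequality in $t=\sqrt{x}$. I checked your numerics and they are right: $20\sqrt{\log 3}\,\log\log 3\approx 16.7>10\ge 0.5x+2$ on the first piece, and the larger roots of $0.5t^2-80t+2$ and $0.5t^2-200t+2$ are $80+\sqrt{6396}\approx 160$ and $200+\sqrt{39996}\approx 400$, comfortably beyond $32$ and $174$. (Pedantically, on the last two pieces you should also note that the smaller root of each quadratic, roughly $0.025$ and $0.01$, lies below the left endpoint of the relevant $t$-range, or just evaluate the quadratic at one interior point; an upward parabola is only nonpositive \emph{between} its roots.) Your observation that no single constant lower bound on $\log x$ suffices over the whole range is exactly why the partition is needed, and it is correctly handled. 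The trade-off: the paper's convexity argument is shorter and needs only two endpoint evaluations, but requires differentiating $\sqrt{x}\log_2 x$ twice and trusting those computations; yours is longer but purely arithmetic and arguably easier to audit.
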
    
We prove \cref{calc_der} in \cref{proof_of_calc_der}. 

From our simple upper bound in \Cref{simple_upper_bound} along with \cref{calc_der}, we get that for all $q<100$,
	$$\alpha(k) \leq \lceil 0.5\log{k} \rceil+1 < 20\sqrt{\log{k}}\log\log{k}$$
	
	We now inductively prove \Cref{thm:sqrt_bnd} for $q \geq 100$, inducting on 
	$q$. 
	$$\alpha(k) = \alpha(2^{3q^2}+\gamma \cdot 2^{q^2}+\beta)$$
	
	When ${\gamma}=0$, by \Cref{thm:basecase_recursion}, $\alpha(k) \leq 
	20\sqrt{\log{k}}\log\log{k}$. For ${\gamma}>0$, from \Cref{break_induction}, 
	$$ \leq \alpha(2^{3q^2}+\beta)+ \alpha(\gamma)+1$$
	
	From \Cref{thm:basecase_recursion}, 
	$$ \leq (q+1) \lceil \log{q} \rceil+4q+7 + \alpha(\gamma)$$
	
	When ${\gamma}\leq 2$ and $\log{k} > 30000$, as $\alpha(1)=\alpha(2)=1$, 
	$\alpha(k) \leq 20\sqrt{\log{k}}\log\log{k}$. Therefore, for ${\gamma} \geq 
	3$, by induction assumption,
	$$ \alpha(k) \leq (q+1) \lceil \log{q} \rceil+4q+7 + 20\sqrt{\log \gamma}
	\log\log{\gamma} $$
	
	As $\sqrt{\log{\gamma}} \leq \sqrt{\log{k} -q^2} \leq \sqrt{3(q+1)^2-q^2} 
	\leq \sqrt{2.1q^2}$ for $q\geq 100$,  
	$$ \leq (q+1) \lceil \log{q} \rceil+4q+7 + 20\sqrt{2.1}q\log\log{k}$$
	
	As $(q+1) \lceil \log{q} \rceil+4q+7 \leq 2q\log{q}$ for $q \geq 100$,
	$$ \leq 2q\log{q} + 20\sqrt{2.1}q\log\log{k} \leq q\log\log{k} + 
	20\sqrt{2.1}q\log\log{k}$$
	$$ =(20\sqrt{2.1}+1)q\log\log{k} \leq 20\sqrt{3}q\log\log{k} \leq 
	20\sqrt{\log k}\log\log{k}$$
\end{proof}

\subsection{Proof of \texorpdfstring{\Cref{thm:basecase_recursion}}
	{Theorem \ref{thm:basecase_recursion}}}
\label{subsec:proofthm11}

We now present the most technical part of our proof: constructing an ideal 
with specific properties for numbers of the form $2^{3q^2}+\beta$ where 
$\beta < 2^{q^2}$. The key idea is to carefully design a collection of sets 
based on the binary representation of $\beta$.

Let $m=2^{3q^2}+\beta$ where $\beta < 2^{q^2}$. For $i,j \in [0,q-1]$, we 
define $F_{ij}$ in the following way: 
\begin{itemize}
	\item If the $(jq+i+1)$th least significant bit of $\beta$ is $1$, then 
	fix $F_{ij}=\emptyset$. \\ Let $\mathcal{F}_0$ be the family of all such 
	sets.
	\item If the $(jq+i+1)$th least significant bit of $\beta$ is $0$, then 
	fix $F_{ij}=[i]_{jq+i}$. \\ Let $\mathcal{F}_1$ be the family of all such 
	sets.
\end{itemize}
We note that the least significant bit of $\beta$ is indexed to be the $1$st bit. On the other hand the indices $i,j$ start from $0$. From this construction, several important properties immediately follow:

\begin{remark}\label{fij_disj2}
	For all $i,j \in [0,q-1]$, $F_{ij} \cap [q^2]=\emptyset$    
\end{remark}

\begin{remark}\label{fij_disj}
	For any $F_{ij},F_{i'j'} \in \mathcal{F}_1$, $F_{ij}\bigcap F_{i'j'} \neq 
	\emptyset$ if and only if $i=i'$ and $j=j'$.   
\end{remark}

\begin{remark}\label{rem:beta1}
	$\beta+ \sum\limits_{F_{ij}\in \mathcal{F}_1} 2^{jq+|F_{ij}|} =\beta+
	\sum\limits_{F_{ij}\in \mathcal{F}_1} 2^{jq+i} = \mathbf{1}_{{q^2}} = 
	2^{q^2}-1$
\end{remark}

\begin{remark}\label{rem:beta2}
	$\sum\limits_{F_{ij}\in \mathcal{F}_0} 2^{jq+|F_{ij}|} =\sum\limits_{F_{ij}
		\in \mathcal{F}_0} 2^{jq} $
\end{remark}

These observations lead to the following key relationship:

\begin{corollary}\label{fij_beta}
	$2^{q^2}-\sum\limits_{i,j \in [0,q-1]} 2^{jq+|F_{ij}|}= \beta +1 - 
	\sum\limits_{F_{ij}\in \mathcal{F}_0} 2^{jq}$     
\end{corollary}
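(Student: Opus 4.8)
\textbf{Proof proposal for \Cref{fij_beta}.}
The plan is to split the sum $\sum_{i,j \in [0,q-1]} 2^{jq+|F_{ij}|}$ according to the partition of the index set induced by the construction. By definition, each pair $(i,j)\in[0,q-1]^2$ contributes a set $F_{ij}$ that lies in exactly one of $\mathcal{F}_0$ or $\mathcal{F}_1$, depending on whether the $(jq+i+1)$th least significant bit of $\beta$ is $1$ or $0$; hence $\{F_{ij}:i,j\in[0,q-1]\}$ is partitioned as $\mathcal{F}_0 \sqcup \mathcal{F}_1$, and therefore
$$\sum_{i,j \in [0,q-1]} 2^{jq+|F_{ij}|} = \sum_{F_{ij}\in\mathcal{F}_0} 2^{jq+|F_{ij}|} + \sum_{F_{ij}\in\mathcal{F}_1} 2^{jq+|F_{ij}|}.$$

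Next I would rewrite each of the two sums on the right using the remarks already established. For the $\mathcal{F}_0$ part, \Cref{rem:beta2} gives $\sum_{F_{ij}\in\mathcal{F}_0} 2^{jq+|F_{ij}|} = \sum_{F_{ij}\in\mathcal{F}_0} 2^{jq}$ (indeed $F_{ij}=\emptyset$ so $|F_{ij}|=0$ for these). For the $\mathcal{F}_1$ part, \Cref{rem:beta1} states $\beta + \sum_{F_{ij}\in\mathcal{F}_1} 2^{jq+|F_{ij}|} = 2^{q^2}-1$, so $\sum_{F_{ij}\in\mathcal{F}_1} 2^{jq+|F_{ij}|} = 2^{q^2}-1-\beta$. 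Substituting both into the displayed identity yields
$$\sum_{i,j \in [0,q-1]} 2^{jq+|F_{ij}|} = \sum_{F_{ij}\in\mathcal{F}_0} 2^{jq} + 2^{q^2} - 1 - \beta.$$

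Finally I would subtract this from $2^{q^2}$ and simplify: the two copies of $2^{q^2}$ cancel, leaving $2^{q^2} - \sum_{i,j\in[0,q-1]} 2^{jq+|F_{ij}|} = \beta + 1 - \sum_{F_{ij}\in\mathcal{F}_0} 2^{jq}$, which is exactly the claimed identity. There is no real obstacle here: the only thing to be careful about is that the index set $[0,q-1]^2$ is genuinely partitioned by $\mathcal{F}_0$ and $\mathcal{F}_1$ (no pair is omitted or double-counted), which is immediate from the dichotomy in the construction; everything else is bookkeeping with \Cref{rem:beta1,rem:beta2}.
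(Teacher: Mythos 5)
Your proposal is correct and matches the paper's (implicit) argument: the corollary is stated as an immediate consequence of \Cref{rem:beta1} and \Cref{rem:beta2}, obtained exactly as you do by splitting the sum over $[0,q-1]^2$ into its $\mathcal{F}_0$ and $\mathcal{F}_1$ parts and rearranging. No gaps.
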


Using these $F_{ij}$ sets, we construct two families of sets that will form 
the basis of our ideal. For all $i,j \in [0,q-1]$, let $S_i'=[q^2], 
T_j'=[jq]$ and let  
$$S_i=(\bigcup_{j=0}^{q-1}F_{ij}) \bigsqcup S_i' \And T_j=
(\bigcup_{i=0}^{q-1}F_{ij}) \bigsqcup T_j'$$

We define $\mathcal{S}=\{S_0,\cdots S_{q-1}\}, \mathcal{S}'=\{S_0',\cdots 
S_{q-1}'\}, \mathcal{T}=\{T_0,\cdots T_{q-1} \}$ and $\mathcal{T}'=
\{T_0',\cdots T_{q-1}' \}$

Our strategy is to compute the cardinality of the ideal generated by 
$\mathcal{S}\cup \mathcal{T}$. We want to show that $\alpha(|\mathbf{ID} 
(\mathcal{S}\cup \mathcal{T})|) \leq 2q$, and since $q = O(\sqrt{\log k})$, 
this would give us the desired upper bound if $|\mathbf{ID} (\mathcal{S}\cup 
\mathcal{T})| = \beta = m-2^{3q^2}$. However, this equality doesn't hold 
exactly, but we'll show that the difference has a small $\alpha$-value.

Computing $|\mathbf{ID}(\mathcal{S}\cup \mathcal{T})|$ directly is complex, 
so we first relate it to $|\mathbf{ID}(\mathcal{S}'\cup \mathcal{T}')|$, 
which equals $2^{q^2}$ because all sets in $\mathcal{S}'$ and $\mathcal{T}'$ 
are subsets of $[q^2]$ and every $S_i' = [q^2]$.

\begin{observation}\label{obs_fij_simp}
	$S_i \cap T_j = [jq]\bigsqcup F_{ij}$
\end{observation}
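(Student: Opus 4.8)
The plan is to unfold the definitions and distribute the intersection. By construction $S_i = \bigl(\bigcup_{j'=0}^{q-1} F_{ij'}\bigr) \sqcup [q^2]$ and $T_j = \bigl(\bigcup_{i'=0}^{q-1} F_{i'j}\bigr) \sqcup [jq]$, so $S_i \cap T_j$ is the union of the four pieces $A = \bigl(\bigcup_{j'} F_{ij'}\bigr) \cap \bigl(\bigcup_{i'} F_{i'j}\bigr)$, $B = \bigl(\bigcup_{j'} F_{ij'}\bigr) \cap [jq]$, $C = [q^2] \cap \bigl(\bigcup_{i'} F_{i'j}\bigr)$, and $D = [q^2] \cap [jq]$.

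Three of these collapse at once. Since $jq \le (q-1)q < q^2$ we have $[jq] \subseteq [q^2]$, hence $D = [jq]$. For $B$ and $C$, Remark~\ref{fij_disj2} gives $F_{i'j'} \cap [q^2] = \emptyset$ for every index pair, and since $[jq] \subseteq [q^2]$ this forces $B = C = \emptyset$.

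The only piece that needs any work is $A = \bigcup_{i',j'} (F_{ij'} \cap F_{i'j})$. Any term in which $F_{ij'}$ or $F_{i'j}$ belongs to $\mathcal{F}_0$ is empty. For terms with both sets in $\mathcal{F}_1$, Remark~\ref{fij_disj} says $F_{ij'} \cap F_{i'j} \neq \emptyset$ only when $i = i'$ and $j' = j$; the single surviving term is then $F_{ij} \cap F_{ij} = F_{ij}$. Thus $A = F_{ij}$ in all cases, including the degenerate case $F_{ij} = \emptyset$. Combining, $S_i \cap T_j = F_{ij} \cup [jq]$, and this union is disjoint because $F_{ij} \cap [q^2] = \emptyset$ (Remark~\ref{fij_disj2}) while $[jq] \subseteq [q^2]$; hence $S_i \cap T_j = [jq] \sqcup F_{ij}$, as claimed.

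There is no real obstacle: the statement follows directly from the disjointness facts already recorded in Remarks~\ref{fij_disj2} and~\ref{fij_disj}. The only points needing a moment's care are checking that the distribution of the intersection is exhaustive (four pieces, not fewer) and verifying that the $\sqcup$ in the conclusion is genuinely justified, i.e. that $F_{ij}$ and $[jq]$ are actually disjoint rather than the notation merely being suggestive.
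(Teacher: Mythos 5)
Your proof is correct and follows the same route as the paper: the paper's own proof is a one-line appeal to Remarks~\ref{fij_disj2} and~\ref{fij_disj}, and your write-up simply makes explicit the four-way distribution of the intersection and the disjointness check that the paper leaves implicit.
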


\begin{proof}
	By \Cref{fij_disj2} and \Cref{fij_disj}, 
	$$S_i \cap T_j = ((\bigcup_{j'=0}^{q-1}F_{ij}) \sqcup [q^2]) \cap 
	((\bigcup_{i'=0}^{q-1}F_{ij}) \sqcup [jq])= [jq]\bigsqcup F_{ij} $$
\end{proof}

\begin{observation}\label{bastwosetsyseq}
	For $p < i$, $2^{T_i} \cap 2^{T_p}= 2^{[pq]}= 2^{T'_i} \cap 2^{T'_p}$ and 
	$2^{S_i} \cap 2^{S_p}= 2^{[q^2]}=2^{S'_i} \cap 2^{S'_p}$
\end{observation} 

\begin{proof}
	By \Cref{fij_disj2} and \Cref{fij_disj}, 
	$$2^{T_i} \cap 2^{T_p} = 2^{T_i \cap T_p}= 2^{T_i' \cap T_p'}= 2^{[pq]} = 
	2^{T_i'} \cap 2^{T_p'} $$
	$$2^{S_i} \cap 2^{S_p} = 2^{S_i \cap S_p}=2^{S_i' \cap S_p'}= 2^{[q^2]} = 
	2^{S_i'} \cap 2^{S_p'} $$
\end{proof}

These observations allow us to relate terms in the inclusion-exclusion 
expansions of $|\mathbf{ID}(\mathcal{S}\cup \mathcal{T})|$ and 
$|\mathbf{ID}(\mathcal{S}'\cup \mathcal{T}')|$. We define:
$$(A_1,A_2\cdots A_{2q})=(2^{S_0},2^{S_1}\cdots 2^{S_{q-1}},2^{T_0},
2^{T_1}\cdots 2^{T_{q-1}})$$ 
$$(A_1',A_2'\cdots A_{2q}')=(2^{S_0'},2^{S_1'}\cdots 2^{S_{q-1}'},
2^{T_0'},2^{T_1'}\cdots 2^{T_{q-1}'})$$
We note that the indices of $A$ and $A'$ start at $1$ while indices of $S,S',T,T'$ start at $0$.

\begin{observation}\label{twosetsyseq}
	For any $\ell \geq 3$, let $I=\{i_1,i_2\cdots i_{\ell}\}$ such that $1 \leq 
	i_1 < \cdots < i_{\ell} \leq 2q$
	$$ |\bigcap\limits_{j \in I} A_j|= |\bigcap\limits_{j \in I} A_j'|$$
\end{observation}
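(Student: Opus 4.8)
\textbf{Proof proposal for \Cref{twosetsyseq}.}

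The plan is to show that, once we intersect at least three of the $2q$ power sets $A_1,\dots,A_{2q}$, the ``extra'' parts of the $S_i$'s and $T_j$'s beyond $[q^2]$ and $[jq]$ — namely the blocks $\bigcup_j F_{ij}$ and $\bigcup_i F_{ij}$ — contribute nothing, because any $\ell\ge 3$-fold intersection already forces the common part down to $[pq]$ for some $p$. Concretely, $\bigcap_{j\in I} A_j = 2^{\,\bigcap_{j\in I} B_j}$ where $B_j$ is the corresponding generator set ($S_i$ or $T_j$), so it suffices to prove $\bigcap_{j\in I} B_j = \bigcap_{j\in I} B_j'$ whenever $|I|\ge 3$, where $B_j'$ is the primed counterpart. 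The key structural facts are \Cref{fij_disj2} ($F_{ij}\cap[q^2]=\emptyset$), \Cref{fij_disj} ($F_{ij}\cap F_{i'j'}=\emptyset$ unless $i=i'$ and $j=j'$), and \Cref{obs_fij_simp} ($S_i\cap T_j=[jq]\sqcup F_{ij}$).

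First I would split $I$ according to how many of its indices point into the ``$S$-block'' $\{1,\dots,q\}$ (i.e.\ sets $S_0,\dots,S_{q-1}$) versus the ``$T$-block'' $\{q+1,\dots,2q\}$ (i.e.\ $T_0,\dots,T_{q-1}$). There are three cases. (i) If $I$ contains indices of two distinct $S_i,S_p$ (with, say, $i>p$), then by \Cref{bastwosetsyseq} already $2^{S_i}\cap 2^{S_p}=2^{[q^2]}$; since every remaining generator (another $S$, or any $T_j$, together with their primed versions) is either equal to $[q^2]$ or a subset of it after intersecting with $[q^2]$, the whole $\ell$-fold intersection collapses to a subset of $2^{[q^2]}$ that does not see any $F$-block, and it equals the primed intersection term-by-term. (ii) Symmetrically, if $I$ contains two distinct $T_j,T_p$, then \Cref{bastwosetsyseq} gives $2^{T_j}\cap 2^{T_p}=2^{[pq]}$ with $p$ the smaller index, again killing all $F$-blocks. (iii) The only remaining possibility, since $\ell\ge 3$, is that $I$ picks exactly one $S_i$ and two $T_j,T_p$ — but that is covered by (ii) — or one $T_j$ and two $S_i,S_p$ — covered by (i) — or mixes; the point is that $\ell\ge 3$ forces at least two indices into the same block, so one of (i),(ii) always applies. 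For the mixed intersection that remains one should also use \Cref{obs_fij_simp} to check that $S_i\cap T_j\cap(\text{stuff})$ has already discarded $F_{ij}$ once another $S$ or $T$ enters. Finally, reintroduce the exponentials: $|\bigcap_{j\in I}A_j| = 2^{|\bigcap_{j\in I}B_j|} = 2^{|\bigcap_{j\in I}B_j'|} = |\bigcap_{j\in I}A_j'|$.

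The main obstacle I anticipate is the bookkeeping in the mixed case where $I$ contains, after the forced pair, a single generator of the other type: one must verify that intersecting $[q^2]$ (coming from two $S$'s) or $[pq]$ (coming from two $T$'s) with a set of the form $(\bigcup_i F_{ij})\sqcup[jq]$ or $(\bigcup_j F_{ij})\sqcup[q^2]$ indeed strips away every $F$-block, which is exactly \Cref{fij_disj2}, and that the resulting set coincides with what the primed sets give — i.e.\ that the $F$-blocks were the only difference between $B_j$ and $B_j'$ to begin with. This is routine given the three cited remarks, but it requires care to enumerate the $S$/$T$ membership patterns of $I$ without missing a subcase; I would organize it as ``$I$ meets the $S$-block in $\ge 2$ points,'' ``$I$ meets the $T$-block in $\ge 2$ points,'' and observe these two cases are exhaustive when $|I|\ge 3$.
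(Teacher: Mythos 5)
Your proposal is correct and follows essentially the same route as the paper: the pigeonhole observation that $\ell\ge 3$ forces two indices into the same block ($S$-type or $T$-type), the use of \Cref{bastwosetsyseq} to collapse the intersection into $2^{[q^2]}$, and the conclusion that the $F$-blocks (disjoint from $[q^2]$ by \Cref{fij_disj2}) then drop out, so the primed and unprimed intersections coincide. The paper's write-up is just a more compact version of your case analysis, intersecting each $A_j$ with $2^{[q^2]}$ directly instead of enumerating the $S$/$T$ membership patterns.
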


\begin{proof}
	As $\ell \geq 3$, by pigeon hole principle, there exists two sets $A_{i_x}$ 
	and $A_{i_y}$ such that either $A_{i_x}, A_{i_y} \in \{2^{S_0},2^{S_1}\cdots 
	2^{S_{q-1}}\}$ or $A_{i_x}, A_{i_y} \in \{2^{T_0},2^{T_1}\cdots 
	2^{T_{q-1}}\}$. From \Cref{bastwosetsyseq}, $A_{i_x} \cap A_{i_y} \subset 
	2^{[q^2]}$. It follows that $\bigcap\limits_{j \in I} A_j \subset 2^{[q^2]}$. 
	Therefore, 
	$$ |\bigcap\limits_{j \in I} A_j|= |\bigcap\limits_{j \in I} (A_j\bigcap 
	2^{[q^2]})|  = |\bigcap\limits_{j \in I} A_j'|$$
\end{proof}

This allows us to compute the difference between the two ideals' 
cardinalities:

\begin{observation}\label{obs_fin_st}
	$$|\mathbf{ID}(\mathcal{S}\cup \mathcal{T})|-|\mathbf{ID}(\mathcal{S}'\cup 
	\mathcal{T}')|$$    
	$$=  (\sum_{i=0}^{q-1} |2^{S_i}| - \sum_{i=0}^{q-1} |2^{S_i'}|)+
	(\sum_{j=0}^{q-1} |2^{T_j}| - \sum_{j=0}^{q-1} |2^{T_j'}|)
	-(\sum_{i,j \in [0,q-1]} (|2^{S_i\bigcap T_j}|)-\sum_{i,j \in [0,q-1]} 
	(|2^{S_i'\bigcap T_j'}|))$$
\end{observation}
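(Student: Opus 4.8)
The plan is to expand both cardinalities via \Cref{inclexclpowset} and then subtract term by term. Writing $\mathbf{ID}(\mathcal{S}\cup\mathcal{T}) = \bigcup_{i=1}^{2q} A_i$ and $\mathbf{ID}(\mathcal{S}'\cup\mathcal{T}') = \bigcup_{i=1}^{2q} A_i'$, \Cref{inclexclpowset} gives
$$|\mathbf{ID}(\mathcal{S}\cup\mathcal{T})| - |\mathbf{ID}(\mathcal{S}'\cup\mathcal{T}')| = \sum_{\emptyset \neq I \subseteq [2q]} (-1)^{|I|+1}\Bigl( \bigl|\textstyle\bigcap_{j\in I} A_j\bigr| - \bigl|\textstyle\bigcap_{j\in I} A_j'\bigr| \Bigr).$$
So it suffices to identify which subsets $I$ contribute nontrivially.

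First I would kill all $I$ with $|I|\geq 3$: by \Cref{twosetsyseq}, $|\bigcap_{j\in I} A_j| = |\bigcap_{j\in I} A_j'|$ for every such $I$, so these terms vanish after subtraction. This leaves only singletons and pairs. Next I would handle the pairs $|I|=2$, splitting them according to whether both indices lie in the ``$S$-block'' $\{1,\dots,q\}$, both lie in the ``$T$-block'' $\{q+1,\dots,2q\}$, or one lies in each. By \Cref{bastwosetsyseq}, a pair inside the $S$-block has $A_i\cap A_p = 2^{[q^2]} = A_i'\cap A_p'$, and a pair inside the $T$-block has $A_i\cap A_p = A_i'\cap A_p'$; hence every non-mixed pair contributes $0$. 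For a mixed pair we may write the two sets as $2^{S_i}$ and $2^{T_j}$ with $i,j\in[0,q-1]$, and then $|A_i\cap A_j| = |2^{S_i\cap T_j}|$, $|A_i'\cap A_j'| = |2^{S_i'\cap T_j'}|$; since $(-1)^{2+1} = -1$, summing over all $i,j\in[0,q-1]$ produces exactly the term $-\bigl(\sum_{i,j}|2^{S_i\cap T_j}| - \sum_{i,j}|2^{S_i'\cap T_j'}|\bigr)$ of the claimed identity.

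Finally the singletons $|I|=1$ contribute with sign $+1$: the first $q$ of them give $\sum_{i=0}^{q-1}(|2^{S_i}| - |2^{S_i'}|)$ and the remaining $q$ give $\sum_{j=0}^{q-1}(|2^{T_j}| - |2^{T_j'}|)$. Adding the three surviving contributions yields the stated formula. The only point requiring care is the bookkeeping between the $1$-based indexing of the $A_i$ and the $0$-based indexing of the $S_i, T_j$, together with keeping the signs $(-1)^{|I|+1}$ straight; there is no genuine obstacle, since \Cref{twosetsyseq} and \Cref{bastwosetsyseq} already do the combinatorial work.
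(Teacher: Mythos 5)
Your proposal is correct and is exactly the argument the paper intends: the observation is stated without an explicit proof precisely because it follows by subtracting the two inclusion--exclusion expansions and cancelling the $|I|\ge 3$ terms via \Cref{twosetsyseq} and the non-mixed pairs via \Cref{bastwosetsyseq}, leaving only the singletons and the mixed $S_i$--$T_j$ pairs. Your bookkeeping of signs and index conventions is accurate, so nothing is missing.
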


Through a series of algebraic manipulations (Proof in \cref{obsalgman}), we derive 
the exact cardinality of our constructed ideal:

\begin{lemma}\label{deflemcalc34}
	$|\mathbf{ID}(\mathcal{S}\cup \mathcal{T})| = \sum_{i=0}^{q-1} |2^{S_i}| 
	+ \sum_{j=0}^{q-1} |2^{T_j}| - \sum_{i,j \in [0,q-1]} 2^{jq}(2^{|F_{ij}|})
	+(q-1)(\sum_{j=0}^{q-1} 2^{jq} - 2^{q^2}) $
\end{lemma}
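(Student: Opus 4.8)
The plan is to start from Observation \ref{obs_fin_st}, which already expresses the difference $|\mathbf{ID}(\mathcal{S}\cup\mathcal{T})| - |\mathbf{ID}(\mathcal{S}'\cup\mathcal{T}')|$ in terms of three "size-one" sums, three "size-one" primed sums, and two "pairwise-intersection" sums. I would then evaluate each of these six quantities explicitly. The primed families are easy: every $S_i' = [q^2]$ and $T_j' = [jq]$, so $\sum_i |2^{S_i'}| = q\cdot 2^{q^2}$, $\sum_j |2^{T_j'}| = \sum_{j=0}^{q-1} 2^{jq}$, and by Observation \ref{bastwosetsyseq} together with the fact that $S_i' \cap T_j' = T_j' = [jq]$ we get $\sum_{i,j}|2^{S_i'\cap T_j'}| = q\sum_{j=0}^{q-1} 2^{jq}$. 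Also $|\mathbf{ID}(\mathcal{S}'\cup\mathcal{T}')| = 2^{q^2}$, as noted in the text, since all these sets are subsets of $[q^2]$ and $S_0' = [q^2]$ is itself in the family.

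Next I would handle the unprimed intersection sum. By Observation \ref{obs_fij_simp}, $S_i\cap T_j = [jq]\sqcup F_{ij}$, and by Remark \ref{fij_disj2} the set $F_{ij}$ is disjoint from $[q^2]\supseteq [jq]$, so $|S_i\cap T_j| = jq + |F_{ij}|$ and hence $\sum_{i,j\in[0,q-1]} |2^{S_i\cap T_j}| = \sum_{i,j\in[0,q-1]} 2^{jq}\cdot 2^{|F_{ij}|}$. Substituting all of this into Observation \ref{obs_fin_st} gives
$$|\mathbf{ID}(\mathcal{S}\cup\mathcal{T})| = 2^{q^2} + \Big(\sum_{i=0}^{q-1}|2^{S_i}| - q\,2^{q^2}\Big) + \Big(\sum_{j=0}^{q-1}|2^{T_j}| - \sum_{j=0}^{q-1} 2^{jq}\Big) - \Big(\sum_{i,j\in[0,q-1]} 2^{jq}2^{|F_{ij}|} - q\sum_{j=0}^{q-1} 2^{jq}\Big).$$
Collecting the terms not involving $S_i$ or $T_j$, the constant part is $2^{q^2} - q\,2^{q^2} - \sum_j 2^{jq} + q\sum_j 2^{jq} = (q-1)\big(\sum_{j=0}^{q-1} 2^{jq} - 2^{q^2}\big)$, which matches the claimed last term of the lemma; the remaining terms are exactly $\sum_i |2^{S_i}| + \sum_j |2^{T_j}| - \sum_{i,j} 2^{jq}2^{|F_{ij}|}$, completing the identity.

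The only real subtlety — and the step I would be most careful about — is Observation \ref{obs_fin_st} itself (its proof is deferred to \cref{obsalgman}), namely that in the inclusion–exclusion expansions of $|\mathbf{ID}(\mathcal{S}\cup\mathcal{T})|$ and $|\mathbf{ID}(\mathcal{S}'\cup\mathcal{T}')|$ only the singleton terms and the $S_i\cap T_j$ cross terms survive the subtraction. This rests on Observation \ref{twosetsyseq}: any intersection of $\ell\ge 3$ of the $A$'s equals the corresponding intersection of the $A'$'s, because among three indices two must come from the same side ($S$-block or $T$-block), forcing the whole intersection into $2^{[q^2]}$ where the primed and unprimed sets agree. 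Once \ref{obs_fin_st} is granted, the lemma is just the bookkeeping above, so I expect no further obstacle; the main care is simply tracking signs and the $(q-1)$ coefficient correctly when consolidating constants.
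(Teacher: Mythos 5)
Your proposal is correct and follows essentially the same route as the paper's deferred proof: evaluate the primed sums ($q\cdot 2^{q^2}$, $\sum_j 2^{jq}$, $q\sum_j 2^{jq}$, and $|\mathbf{ID}(\mathcal{S}'\cup\mathcal{T}')|=2^{q^2}$), use \cref{obs_fij_simp} to rewrite the unprimed cross terms as $2^{jq}2^{|F_{ij}|}$, and substitute into \cref{obs_fin_st}; your consolidation of the constants into $(q-1)(\sum_j 2^{jq}-2^{q^2})$ matches the paper's computation. You also correctly identify that the real content lies in \cref{obs_fin_st} via \cref{twosetsyseq}, which the paper treats the same way.
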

This leads to a bound on the $\alpha$-value of our constructed ideal (as $|\mathcal{S}\cup \mathcal{T}| = 2q$):

\begin{corollary}\label{cons:system1}
	$\alpha(\sum_{i=0}^{q-1} |2^{S_i}| + \sum_{j=0}^{q-1} |2^{T_j}| - 
	\sum_{i,j \in [0,q-1]} 2^{jq}(2^{|F_{ij}|})+(q-1)(\sum_{j=0}^{q-1} 2^{jq} 
	- 2^{q^2})) = \alpha(|\mathbf{ID}(\mathcal{S}\cup \mathcal{T})|) \leq 2q$
\end{corollary}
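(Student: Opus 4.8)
The statement to establish is \Cref{cons:system1}, which is really two assertions packaged together: first, that the complicated expression
$$\sum_{i=0}^{q-1} |2^{S_i}| + \sum_{j=0}^{q-1} |2^{T_j}| - \sum_{i,j \in [0,q-1]} 2^{jq}(2^{|F_{ij}|})+(q-1)\Bigl(\sum_{j=0}^{q-1} 2^{jq} - 2^{q^2}\Bigr)$$
equals $|\mathbf{ID}(\mathcal{S}\cup \mathcal{T})|$, and second, that this common value has $\alpha$-value at most $2q$. The first part is \emph{exactly} the content of \Cref{deflemcalc34}, which the paper states (with the algebraic proof deferred to an appendix). So the only thing that needs to be argued here is the inequality $\alpha(|\mathbf{ID}(\mathcal{S}\cup \mathcal{T})|) \leq 2q$, and the plan is to read this straight off the definition of $\alpha$.

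\textbf{Main argument.} By \Cref{deflemcalc34}, the number $N := |\mathbf{ID}(\mathcal{S}\cup \mathcal{T})|$ is the cardinality of the ideal generated by the family $\mathcal{S}\cup \mathcal{T} = \{S_0,\dots,S_{q-1},T_0,\dots,T_{q-1}\}$. This family has $|\mathcal{S}\cup\mathcal{T}| \le |\mathcal{S}| + |\mathcal{T}| = q + q = 2q$ members. Since $\alpha(N)$ is by definition the \emph{minimum} size of a family whose generated ideal has cardinality $N$, exhibiting one family of size at most $2q$ with $|\mathbf{ID}(\cdot)| = N$ immediately gives $\alpha(N) \le 2q$. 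Combining with the equality from \Cref{deflemcalc34} yields the displayed chain in \Cref{cons:system1}.

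\textbf{Where the care is needed.} The one subtlety is that $\mathcal{S}$ and $\mathcal{T}$ are written as $q$-element lists but might a priori contain repeated sets (for instance if many $F_{ij}=\emptyset$, several $S_i$ could coincide, or $T_0 = [0] = \emptyset$ collides with something), which would only help — the family would be smaller — so the bound $|\mathcal{S}\cup\mathcal{T}| \le 2q$ is safe regardless. I would state this explicitly to avoid any worry that the count could exceed $2q$. No genuine obstacle arises: the real work — the inclusion–exclusion bookkeeping that produces the closed form for $N$ — lives in \Cref{deflemcalc34} and its appendix proof, and this corollary is just the one-line consequence "a family of size $2q$ realizes this ideal cardinality, hence $\alpha \le 2q$." The anticipated hard part is therefore not in this corollary at all; it is making sure \Cref{deflemcalc34} is correctly invoked, i.e.\ that the expression in the corollary's statement is verbatim the right-hand side of \Cref{deflemcalc34}, which it is.
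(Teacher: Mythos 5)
Your proposal is correct and matches the paper's own (essentially one-line) justification: the equality is Lemma~\ref{deflemcalc34}, and the bound $\alpha(|\mathbf{ID}(\mathcal{S}\cup\mathcal{T})|)\le 2q$ follows because $\mathcal{S}\cup\mathcal{T}$ is itself a generating family of at most $2q$ sets, which is exactly what the paper notes in the parenthetical "(as $|\mathcal{S}\cup\mathcal{T}|=2q$)". Your extra remark that possible coincidences among the $S_i$ or $T_j$ only shrink the family and hence cannot hurt the bound is a sensible (and correct) precaution.
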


Our goal is to show that $\alpha(m - |\mathbf{ID}(\mathcal{S}\cup 
\mathcal{T})|) = O(q \log q)$, which would imply $\alpha(m) = 
O(q \log q) = O(\sqrt{\log k} \log \log k)$ by the splitting lemma and 
\Cref{cons:system1}. However, the expression for 
$|\mathbf{ID}(\mathcal{S}\cup \mathcal{T})|$ contains inconvenient terms 
like $\sum_{i=0}^{q-1} |2^{S_i}|$. We first eliminate these terms:

\begin{observation}\label{cons:system2}
	$\alpha(2^{3q^2-1}-\sum_{i=0}^{q-1} |2^{S_i}| - \sum_{j=0}^{q-1} 
	|2^{T_j}|+2^{q^2}) \leq 2q+3$
\end{observation}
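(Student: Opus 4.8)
The plan is to isolate the quantity $2^{3q^2-1} - \sum_{i=0}^{q-1}|2^{S_i}| - \sum_{j=0}^{q-1}|2^{T_j}| + 2^{q^2}$ and express it as a small combination of terms whose $\alpha$-values we already control, then invoke the Splitting and Lifting lemmas. First I would recall that $|2^{S_i}| = 2^{|S_i|}$ and $|2^{T_j}| = 2^{|T_j|}$, and compute $|S_i|$ and $|T_j|$ explicitly from the construction: since $S_i = (\bigcup_{j=0}^{q-1} F_{ij}) \sqcup S_i'$ with $S_i' = [q^2]$ and the $F_{ij}$ over varying $j$ are pairwise disjoint (Remark~\ref{fij_disj}) and disjoint from $[q^2]$ (Remark~\ref{fij_disj2}), we get $|S_i| = q^2 + \sum_{j=0}^{q-1}|F_{ij}|$. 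Similarly $|T_j| = jq + \sum_{i=0}^{q-1}|F_{ij}|$. Hence $\sum_i |2^{S_i}| = \sum_{i=0}^{q-1} 2^{q^2 + \sum_j |F_{ij}|}$ and $\sum_j |2^{T_j}| = \sum_{j=0}^{q-1} 2^{jq + \sum_i |F_{ij}|}$.

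Next I would bound each of these sums crudely. Each $|F_{ij}| \le i \le q-1$, so each exponent $q^2 + \sum_j|F_{ij}| \le q^2 + q(q-1) < 2q^2$, and similarly $jq + \sum_i |F_{ij}| \le (q-1)q + q(q-1) < 2q^2$. Therefore both $\sum_i|2^{S_i}|$ and $\sum_j|2^{T_j}|$ are sums of at most $q$ powers of two, each smaller than $2^{2q^2}$, so each sum is a nonnegative integer strictly less than $q \cdot 2^{2q^2} \le 2^{2q^2 + \log q}$, which is comfortably below $2^{3q^2-1}$ for $q$ large (certainly for the regime $q \ge 100$ we care about). Thus $N := 2^{3q^2-1} - \sum_i|2^{S_i}| - \sum_j|2^{T_j}| + 2^{q^2}$ is a positive integer. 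Writing $N = \sum_i|2^{S_i}|$ plus the telescoped remainder is not quite the right grouping; instead I would write $N$ directly as a signed sum of powers of two: $2^{3q^2-1}$ is one power of two, $2^{q^2}$ is one power of two, $\sum_i|2^{S_i}|$ is a sum of $q$ powers of two (subtracted), and $\sum_j|2^{T_j}|$ is a sum of $q$ powers of two (subtracted). So $N$ is a signed combination of $2q+2$ powers of two with exactly two positive terms.

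To turn this into an $\alpha$-bound I would apply the Splitting lemma repeatedly in the form $\alpha(a+b) \le \alpha(a) + \alpha(b+1)$, peeling off one power of two at a time. Concretely: $N = 2^{3q^2-1} + 2^{q^2} - P - Q$ where $P = \sum_i|2^{S_i}|$, $Q = \sum_j|2^{T_j}|$. I would first handle the positive part $2^{3q^2-1} + 2^{q^2}$: by Splitting, $\alpha(2^{3q^2-1} + 2^{q^2} - 1) $ relates to $\alpha(2^{3q^2-1}) + \alpha(2^{q^2})$; using the Lifting lemma, $\alpha(2^{3q^2-1}) = \alpha(2^t\cdot 1) \le \alpha(1) = 1$ and likewise $\alpha(2^{q^2}) \le 1$. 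The subtractions of $P$ and $Q$ are the delicate part, since Splitting only adds. Here I would instead use the identity implicitly available from Lemma~\ref{deflemcalc34}: $|\mathbf{ID}(\mathcal{S}\cup\mathcal{T})| = P + Q - \sum_{i,j}2^{jq+|F_{ij}|} + (q-1)(\sum_j 2^{jq} - 2^{q^2})$, which already has $\alpha$-value $\le 2q$ by Corollary~\ref{cons:system1}, and rearrange so that $P + Q$ appears on one side with $|\mathbf{ID}(\mathcal{S}\cup\mathcal{T})|$ and a few extra powers-of-two terms on the other. Then $2^{3q^2-1} + 2^{q^2} - P - Q$ becomes $2^{3q^2-1} + 2^{q^2} - |\mathbf{ID}(\mathcal{S}\cup\mathcal{T})| + \big(\text{small signed sum of powers of two and } (q-1)\text{-multiples}\big)$, and crucially each of $\sum_{i,j}2^{jq+|F_{ij}|}$ and $(q-1)\sum_j 2^{jq}$ and $(q-1)2^{q^2}$ has bounded $\alpha$: by Lifting, $(q-1)2^{jq}$ reduces to $\alpha(q-1)$-type terms, but actually the cleaner route is that the whole combination $2^{q^2} - \sum_{i,j}2^{jq+|F_{ij}|} + (q-1)\sum_j 2^{jq} - (q-1)2^{q^2}$ is small in magnitude (it is $\beta+1$-related, cf. Corollary~\ref{fij_beta} and Remark~\ref{rem:beta1}). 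The main obstacle — and where I would spend the most care — is bookkeeping the exact constant: showing that after all the Splitting/Lifting peeling, the number of $+1$ corrections and the base-case $\alpha(1)=1$ terms total at most $3$ on top of the $2q$ from Corollary~\ref{cons:system1}, i.e. verifying that exactly $2q+3$ suffices rather than $2q + O(\log q)$. I expect this to come down to the observation that $2^{3q^2-1}$ and $2^{q^2}$ each cost $1$ via Lifting, and the leftover $\pm 1$ from one application of Splitting accounts for the third unit, with everything else absorbed into the $2q$ of the already-constructed system.
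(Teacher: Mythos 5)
You set up the right object---you correctly compute $|S_i|$ and $|T_j|$, observe that $N = 2^{3q^2-1}+2^{q^2}-\sum_i|2^{S_i}|-\sum_j|2^{T_j}|$ is a signed combination of exactly $2q+2$ powers of two, and check positivity---but you then miss the one step that finishes the proof and instead wander into an argument that does not close. The finishing step is the block-count machinery: \cref{obs:num_blocks} says that any positive integer expressible as $\sum_{i\in[t]}(-1)^{x_i}2^{y_i}$ has block count at most $t$, so $bl(N)\le 2q+2$, and \cref{simple_upper_bound} gives $\alpha(N)\le bl(N)+1\le 2q+3$. That is the entire proof in the paper; the subtractions are handled by the block-count bound, not by Splitting.

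Your proposed detour is where the gap lies. You correctly note that the Splitting lemma ``only adds'' and cannot peel off the subtracted terms $P=\sum_i|2^{S_i}|$ and $Q=\sum_j|2^{T_j}|$; your fix is to re-express $P+Q$ through \cref{deflemcalc34} and \cref{cons:system1}. But this conflates two different occurrences of $2q$: the $2q$ in the target bound $2q+3$ counts the subtracted powers of two, whereas the $2q$ of \cref{cons:system1} counts the generators of $\mathbf{ID}(\mathcal{S}\cup\mathcal{T})$. Routing through \cref{cons:system1} would at best bound $\alpha$ of a quantity involving $|\mathbf{ID}(\mathcal{S}\cup\mathcal{T})|$, which is precisely what \cref{cons:system2} is designed to be combined with \emph{later} (to form $t_1$ in \cref{mid_alpha_sets}); using it inside this proof is both unnecessary and, as you concede yourself, leaves the constant ``$+3$'' unverified. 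As written, the proposal does not establish the claimed bound.
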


\begin{proof}
	From \Cref{obs:num_blocks}, $2^{3q^2-1}-\sum_{i=0}^{q-1} |2^{S_i}| - 
	\sum_{j=0}^{q-1} |2^{T_j}|+2^{q^2}$ has at most $2q+2$ blocks. Therefore, 
	from our earlier results on block counts and $\alpha$ values,
	$\alpha(2^{3q^2-1}-\sum_{i=0}^{q-1} |2^{S_i}| - \sum_{j=0}^{q-1} 
	|2^{T_j}|+2^{q^2}) \leq 2q+3$  
\end{proof}

For clarity, we define two auxiliary values:
$t_1=2^{3q^2-1}+2^{q^2}-1 - \sum_{i,j \in [0,q-1]} 2^{jq}(2^{|F_{ij}|})
+(q-1)(\sum_{j=0}^{q-1} 2^{jq} - 2^{q^2})$

$t_2=2^{3q^2-1}+ \sum_{F_{ij}\in \mathcal{F}_0} 2^{jq}- (q-1)
(\sum_{j=0}^{q-1} 2^{jq} - 2^{q^2})$

Here, $t_1$ is the sum of $|\mathbf{ID}(\mathcal{S}\cup \mathcal{T})|$ and 
the expression from \Cref{cons:system2}. We now show that $t_2 = m - t_1$:

\begin{observation}\label{sumt1t2}
	$m=t_1+t_2$
\end{observation}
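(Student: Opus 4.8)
The plan is to verify the identity $m = t_1 + t_2$ by direct algebraic computation, tracking the powers of $2$ that appear. Recall $m = 2^{3q^2} + \beta$, so since $2^{3q^2} = 2^{3q^2-1} + 2^{3q^2-1}$, I would write $m = 2^{3q^2-1} + 2^{3q^2-1} + \beta$. The term $2^{3q^2-1}$ appears once in $t_1$ and once in $t_2$, so these two copies account for the $2 \cdot 2^{3q^2-1}$ part. It then remains to check that the leftover terms of $t_1$ and $t_2$ sum to $\beta$, i.e.\ that
$$\Bigl(2^{q^2}-1 - \sum_{i,j \in [0,q-1]} 2^{jq+|F_{ij}|} + (q-1)\bigl(\textstyle\sum_{j=0}^{q-1} 2^{jq} - 2^{q^2}\bigr)\Bigr) + \Bigl(\sum_{F_{ij}\in \mathcal{F}_0} 2^{jq} - (q-1)\bigl(\textstyle\sum_{j=0}^{q-1} 2^{jq} - 2^{q^2}\bigr)\Bigr) = \beta.$$

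The key observation is that the two $(q-1)\bigl(\sum_{j=0}^{q-1} 2^{jq} - 2^{q^2}\bigr)$ terms cancel exactly. What survives is $2^{q^2} - 1 - \sum_{i,j \in [0,q-1]} 2^{jq+|F_{ij}|} + \sum_{F_{ij}\in \mathcal{F}_0} 2^{jq}$. Now I would apply \Cref{fij_beta}, which states precisely that $2^{q^2}-\sum_{i,j \in [0,q-1]} 2^{jq+|F_{ij}|}= \beta +1 - \sum_{F_{ij}\in \mathcal{F}_0} 2^{jq}$. Rearranging \Cref{fij_beta} gives $2^{q^2} - 1 - \sum_{i,j \in [0,q-1]} 2^{jq+|F_{ij}|} + \sum_{F_{ij}\in \mathcal{F}_0} 2^{jq} = \beta$, which is exactly what we need. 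Hence $m = t_1 + t_2$.

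There is essentially no obstacle here; the statement is a bookkeeping identity, and the only subtlety is making sure the $(q-1)(\cdots)$ terms are written with the same sign convention in $t_1$ and $t_2$ so that they cancel, and that \Cref{fij_beta} is invoked in the correct rearranged form. I would present the computation in a short displayed chain: substitute the definitions of $t_1$ and $t_2$, cancel the matching $2^{3q^2-1}$ and $(q-1)(\cdots)$ terms, and finish by citing \Cref{fij_beta}. If one prefers, \Cref{fij_beta} itself can be re-derived on the spot from \Cref{rem:beta1} and \Cref{rem:beta2}: \Cref{rem:beta1} gives $\sum_{F_{ij}\in\mathcal{F}_1} 2^{jq+|F_{ij}|} = 2^{q^2}-1-\beta$, \Cref{rem:beta2} gives $\sum_{F_{ij}\in\mathcal{F}_0} 2^{jq+|F_{ij}|} = \sum_{F_{ij}\in\mathcal{F}_0} 2^{jq}$ (since $F_{ij}=\emptyset$ for these), and adding them yields $\sum_{i,j} 2^{jq+|F_{ij}|} = 2^{q^2}-1-\beta + \sum_{F_{ij}\in\mathcal{F}_0} 2^{jq}$, equivalent to what is needed.
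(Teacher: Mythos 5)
Your proposal is correct and follows essentially the same route as the paper: add the definitions of $t_1$ and $t_2$, observe that the $(q-1)\bigl(\sum_{j}2^{jq}-2^{q^2}\bigr)$ terms cancel and the two copies of $2^{3q^2-1}$ combine to $2^{3q^2}$, then invoke \cref{fij_beta} to identify the remainder with $\beta$. The optional re-derivation of \cref{fij_beta} from \cref{rem:beta1,rem:beta2} is also accurate but not needed.
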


\begin{proof}
	By definition,
	$t_1+t_2= 2^{3q^2}+2^{q^2}-1-\sum\limits_{i,j \in [0,q-1]} 2^{jq+|F_{ij}|}
	+\sum\limits_{F_{ij}\in \mathcal{F}_0} 2^{jq}$
	
	From \Cref{fij_beta}, 
	$=2^{3q^2}+\beta = m$    
\end{proof}

\begin{observation}\label{mid_alpha_sets}
	$\alpha(t_1) \leq 4q+3 $   
\end{observation}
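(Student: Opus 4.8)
\textbf{Proof proposal for \Cref{mid_alpha_sets}.}

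The plan is to show that $t_1$ decomposes, via the splitting lemma, into a ``large block'' of the form $2^{3q^2-1}$ plus a correction term, where the correction term is small in absolute value and hence controllable by its block count. Recall that
$$t_1 = 2^{3q^2-1} + \left(2^{q^2}-1 - \sum_{i,j \in [0,q-1]} 2^{jq}\,2^{|F_{ij}|} + (q-1)\Bigl(\sum_{j=0}^{q-1} 2^{jq} - 2^{q^2}\Bigr)\right).$$
First I would bound the magnitude of the bracketed correction term $r := t_1 - 2^{3q^2-1}$. Each $|F_{ij}| \le q-1$, so $\sum_{i,j} 2^{jq}2^{|F_{ij}|} \le q^2 \cdot 2^{q-1} \cdot 2^{(q-1)q} < 2^{q^2}$; the term $\sum_{j=0}^{q-1} 2^{jq} - 2^{q^2}$ is negative with absolute value less than $2^{q^2}$; and $2^{q^2}-1 < 2^{q^2}$. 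Hence $|r|$ is a sum/difference of boundedly many terms, each of absolute value at most roughly $q \cdot 2^{q^2}$, so $|r| < 2^{3q^2-2}$ for $q$ large; in particular $2^{3q^2-1} + r \ge 1$ and is positive.

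Next I would invoke the splitting lemma (\Cref{split}) to write $\alpha(t_1) = \alpha(2^{3q^2-1} + r)$; if $r \ge 0$ we split off $2^{3q^2-1}$ directly, getting $\alpha(t_1) \le \alpha(2^{3q^2-1}) + \alpha(r+1)$, and since $\alpha(2^{3q^2-1}) = 1$ by the lifting lemma (\Cref{lift}) applied to $\alpha(1)=1$, this reduces to bounding $\alpha(r+1)$. If $r < 0$, I would instead write $t_1 = 2^{3q^2-2} + (2^{3q^2-2}+r)$ with both summands positive, split, and bound $\alpha(2^{3q^2-2}+r+1)$; either way the task reduces to bounding $\alpha$ of a number of absolute magnitude at most about $q\cdot 2^{q^2} + 2^{3q^2-2}$, but more importantly one whose block count is small. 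The cleanest route is: $r$ (or $r + 2^{3q^2-2}$) is a signed sum of $O(q^2)$ powers of $2$ — namely $2^{q^2}$, the $-1$, the $q^2$ terms $-2^{jq+|F_{ij}|}$, and the $q$ terms $(q-1)2^{jq}$ together with $-(q-1)2^{q^2}$ — wait, the coefficients $q-1$ are not powers of $2$. So I would instead bound the block count directly: by \Cref{obs:num_blocks}, since $t_1$ is obtained by summing $2^{3q^2-1}$ with a bounded-length signed combination of powers of $2$ (expanding each $(q-1)2^{jq}$ as at most $\log q < q$ powers of $2$), $t_1$ is a signed sum of at most $2 + q^2 + q\log q + \log q + 1 = O(q^2 \log q)$ powers of $2$, giving $bl(t_1) = O(q^2\log q)$ — but that only yields $\alpha(t_1) \le bl(t_1)+1 = O(q^2\log q)$ via \Cref{simple_upper_bound}, which is too weak.

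Thus the main obstacle, and the step needing genuine care, is getting the constant $4q+3$ rather than a polynomially-larger-in-$q$ bound. The resolution I would pursue: apply the lifting lemma to factor out the largest common power of $2$ dividing every term in the relevant expression. Note $2^{jq+|F_{ij}|}$ for $j=0$ contributes terms with $2^{|F_{0i}|}$, so the common factor among $\{2^{q^2}, 1, \{2^{jq+|F_{ij}|}\}, \{(q-1)2^{jq}\}\}$ may just be $2^0$; instead, the right move is to peel off $2^{3q^2-1}$ by splitting, then observe that the remaining number $2^{q^2}-1-\sum 2^{jq}2^{|F_{ij}|}+(q-1)(\sum 2^{jq}-2^{q^2})$, when we add $1$ (for the split), rearranges — using $\sum_{F_{ij}\in\mathcal F_1}2^{jq+i} = \sum_{i,j}2^{jq+|F_{ij}|} - \sum_{F_{ij}\in\mathcal F_0}2^{jq}$ and \Cref{rem:beta1} — into $\beta$-related quantities plus a multiple of $(\sum 2^{jq}-2^{q^2})$. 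I would then apply \Cref{obs:num_blocks} to \emph{this} rearranged expression, which is a signed sum of only $O(q)$ powers of $2$ after telescoping $\sum_{j=0}^{q-1}2^{jq} - 2^{q^2} = -(2^{q^2} - 2^{(q-1)q}) - 2^{(q-1)q} + \cdots$ carefully, and the coefficient $q-1$ is absorbed by noting $(q-1)\cdot 2^{jq}$ across $j$ combined with $\beta < 2^{q^2}$ contributes a controlled number of blocks; concretely one shows the block count is at most $4q+2$, so by \Cref{simple_upper_bound} and the split, $\alpha(t_1) \le 1 + (4q+2+1) = 4q+4$ — and a slightly more careful count (using that the $2^{3q^2-1}$ and the top of the correction do not interact, since $|r| < 2^{3q^2-2}$, so no carries propagate across) saves one to give $\alpha(t_1)\le 4q+3$. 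The delicate bookkeeping is exactly the carry/no-carry analysis between the block at position $3q^2-1$ and the $O(q^2)$-bit correction below it.
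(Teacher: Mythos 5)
Your approach has a genuine gap, and it is not the paper's route. You try to control $\alpha(t_1)$ by peeling off $2^{3q^2-1}$ and bounding the \emph{block count} of the remainder $r=t_1-2^{3q^2-1}$. But simplify $r$ using \Cref{rem:beta1} and \Cref{rem:beta2}: since $\sum_{i,j}2^{jq+|F_{ij}|}=(2^{q^2}-1-\beta)+\sum_{F_{ij}\in\mathcal F_0}2^{jq}$, one gets $r=\beta-\sum_{F_{ij}\in\mathcal F_0}2^{jq}+(q-1)\bigl(\sum_{j=0}^{q-1}2^{jq}-2^{q^2}\bigr)$. The term $\beta$ appears with coefficient $1$ and is an \emph{arbitrary} number below $2^{q^2}$, so $bl(t_1)$ can be as large as $\Theta(q^2)$; the bound $\beta<2^{q^2}$ controls magnitude, not block count. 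Hence your claim that a rearrangement brings the block count down to $4q+2$ is false, and no carry/telescoping bookkeeping can rescue a pure block-count argument. (Indeed, if $t_1$ had $O(q)$ blocks, the elaborate ideal construction of Section 4.4 would be unnecessary.)

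The paper's proof is one line but uses a different decomposition: $t_1=N_1+(N_2-1)$, where $N_1=|\mathbf{ID}(\mathcal S\cup\mathcal T)|$ is exactly the quantity in \Cref{deflemcalc34} and $N_2=2^{3q^2-1}-\sum_i|2^{S_i}|-\sum_j|2^{T_j}|+2^{q^2}$ is the quantity in \Cref{cons:system2}. The summand $N_1$ absorbs the hard, $\beta$-dependent part of $t_1$, and $\alpha(N_1)\le 2q$ holds \emph{not} via block counts but because $N_1$ is by construction the cardinality of an ideal generated by the $2q$ sets $\mathcal S\cup\mathcal T$ (\Cref{cons:system1}). The summand $N_2$ is a signed sum of only $2q+2$ powers of $2$, so $\alpha(N_2)\le 2q+3$ by \Cref{obs:num_blocks} and \Cref{simple_upper_bound}. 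The splitting lemma (\Cref{split}) with $m=N_1$, $k=N_2-1$ then gives $\alpha(t_1)\le\alpha(N_1)+\alpha(N_2)\le 4q+3$. The idea you are missing is precisely that the $2q$ in the bound comes from the number of generators of the constructed ideal, not from any block-count property of $t_1$ itself.
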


\begin{proof}
	This follows from the splitting lemma along with \Cref{cons:system1} and 
	\Cref{cons:system2}.
\end{proof}

By the splitting lemma and \Cref{sumt1t2}, it is now sufficient to prove that 
$\alpha(t_2+1)$ is small. We'll show that $t_2$ can be written in a special 
form that allows us to bound its $\alpha$-value efficiently:

\begin{observation}\label{pre_final_alpha_count}
	There exists $-1 \leq a_j \leq q-1$ for $j \in [0,q-1]$ and $a_q=-(q-1)$, 
	for which 
	$t_2 = 2^{3q^2-1}-\sum_{j=0}^{q} a_j2^{jq}$
\end{observation}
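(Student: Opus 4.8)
The plan is to write $t_2$ explicitly by collecting, for each value of $j \in [0,q-1]$, all the contributions at "scale" $2^{jq}$, and to read off the coefficient $a_j$ from the defining expression
$$t_2=2^{3q^2-1}+ \sum_{F_{ij}\in \mathcal{F}_0} 2^{jq}- (q-1)\Bigl(\sum_{j=0}^{q-1} 2^{jq} - 2^{q^2}\Bigr).$$
First I would observe that $2^{q^2} = 2^{q\cdot q}$, so the term $(q-1)2^{q^2}$ is exactly the $j=q$ contribution; this accounts for $a_q = -(q-1)$ (note the sign: the expression has $+(q-1)2^{q^2}$ after distributing the outer minus over the $-2^{q^2}$ inside, so moving it to the $-\sum a_j 2^{jq}$ side gives $a_q = -(q-1)$). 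For each fixed $j \in [0,q-1]$, the coefficient of $2^{jq}$ coming from the last sum is $-(q-1)$, and the coefficient coming from $\sum_{F_{ij}\in\mathcal{F}_0} 2^{jq}$ is $|\{\,i \in [0,q-1] : F_{ij} \in \mathcal{F}_0\,\}|$, i.e. the number of indices $i$ such that the $(jq+i+1)$th least significant bit of $\beta$ is $1$; call this $c_j$, so $0 \le c_j \le q$. Hence the net coefficient of $2^{jq}$ in $t_2$ (before the leading term) is $c_j - (q-1)$, and so $a_j = (q-1) - c_j$, which satisfies $-1 \le a_j \le q-1$ since $0 \le c_j \le q$.

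The one subtlety I would be careful about is that $2^{jq}$ for different $j$ are genuinely at different, non-overlapping bit positions only when $q \ge 1$ (and the $c_j \le q$ bound means each block-of-$q$-bits coefficient never forces a carry out of its window in a way that breaks the claimed form — but actually the statement as written does not require any normalization, just the algebraic identity $t_2 = 2^{3q^2-1} - \sum_{j=0}^q a_j 2^{jq}$, so I do not even need the coefficients to be "digits"). So the proof is purely a bookkeeping identity: expand the definition of $t_2$, group by powers of $2^{jq}$, and set $a_j := (q-1) - c_j$ for $j \le q-1$ and $a_q := -(q-1)$. The bound $-1 \le a_j \le q-1$ then follows from $0 \le c_j \le q$, where the upper bound $c_j \le q$ is because $i$ ranges over the $q$ values $[0,q-1]$, and $c_j \ge 0$ trivially.

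The main (very mild) obstacle is just getting the sign of $a_q$ and the treatment of the $2^{q^2}$ term right: one must notice $2^{q^2}$ is the $j=q$ term in the geometric-style progression $2^{0\cdot q}, 2^{1\cdot q}, \dots, 2^{q\cdot q}$, and that after distributing $-(q-1)(\cdots - 2^{q^2})$ it contributes $+(q-1)2^{q^2}$, forcing $a_q = -(q-1)$. Everything else is a direct rearrangement using only the definitions of $\mathcal{F}_0$ and of $t_2$, so no earlier lemma beyond the defining expressions is needed; I would present it as a short computation followed by the verification of the range of each $a_j$.
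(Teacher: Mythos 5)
Your proposal is correct and is essentially identical to the paper's proof: the paper also groups the terms by powers of $2^{jq}$, letting $b_j$ (your $c_j$) be the number of indices $i$ with $F_{ij}\in\mathcal{F}_0$, sets $a_j=q-1-b_j$ for $j\in[0,q-1]$ and $a_q=-(q-1)$ from the distributed $+(q-1)2^{q^2}$ term, and reads off $-1\le a_j\le q-1$ from $0\le b_j\le q$. Your remark that no ``digit'' normalization is needed is also consistent with how the paper uses the identity.
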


\begin{proof}
	$t_2= 2^{3q^2-1}+\sum_{F_{ij}\in \mathcal{F}_0} 2^{jq} - (q-1)
	(\sum_{j=0}^{q-1} 2^{jq} - 2^{q^2})$
	
	For a given $j$, let $0\leq b_j\leq q$ be the number of $F_{ij} \in 
	\mathcal{F}_0$
	$= 2^{3q^2-1} + (q-1)2^{q^2} - \sum_{j=0}^{q-1} (q-1-b_j)2^{jq}$
	
	Therefore, for some $-1 \leq a_j \leq q-1$ for $j \in [0,q-1]$ and 
	$a_q=-(q-1)$
	$=2^{3q^2-1} - \sum_{j=0}^{q} a_j2^{jq} $
\end{proof}

Finally, we show that numbers with this special structure have a small 
$\alpha$-value:

\begin{lemma}\label{final_alpha_count}
	For any $|a_j| \leq q-1$ for $j \in [0,q]$, 
	$\alpha(2^{3q^2-1}-\sum_{j=0}^{q} a_j2^{jq} + 1) \leq (q+1)\lceil 
	\log{q} \rceil+3$
\end{lemma}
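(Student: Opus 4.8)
The plan is to express the quantity $N := 2^{3q^2-1}-\sum_{j=0}^{q} a_j 2^{jq} + 1$ directly as a short signed sum of powers of two, and then to invoke the two facts already established: \Cref{obs:num_blocks} (a positive integer that is a signed sum of $t$ powers of two has block count at most $t$) and \Cref{simple_upper_bound} ($\alpha(k)\le bl(k)+1$).

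First I would bound the magnitude of the subtracted sum, so that $N\ge 1$ and \Cref{obs:num_blocks} is applicable. Since $|a_j|\le q-1$, we have $\bigl|\sum_{j=0}^{q} a_j 2^{jq}\bigr|\le (q-1)\sum_{j=0}^{q}2^{jq}< q\cdot 2^{q^2+1}$, which is strictly smaller than $2^{3q^2-1}$ for $q\ge 2$; for $q=1$ every $a_j$ must vanish, so $N=5$ and the claimed bound $3$ holds directly. Hence $N\ge 1$ in all cases.

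The core step is to rewrite each coefficient $a_j$ in binary. Because $|a_j|\le q-1<2^{\lceil\log q\rceil}$, we may write $|a_j|=\sum_{l=0}^{\lceil\log q\rceil-1} b_{jl}2^{l}$ with $b_{jl}\in\{0,1\}$, so that $a_j 2^{jq}$ is a signed sum of at most $\lceil\log q\rceil$ powers of two (one per $1$-bit of $|a_j|$, all carrying the common sign of $a_j$). Summing over the $q+1$ indices $j\in[0,q]$ and adjoining the two extra terms $2^{3q^2-1}$ and $2^{0}$ gives
$$N = 2^{3q^2-1}+2^{0}-\sum_{j=0}^{q}\operatorname{sign}(a_j)\sum_{l\,:\,b_{jl}=1}2^{jq+l},$$
a signed sum of at most $(q+1)\lceil\log q\rceil+2$ powers of two, each term of the shape $(-1)^{x}2^{y}$ required by \Cref{obs:num_blocks}.

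Finally, \Cref{obs:num_blocks} yields $bl(N)\le (q+1)\lceil\log q\rceil+2$, and then \Cref{simple_upper_bound} gives $\alpha(N)\le bl(N)+1\le (q+1)\lceil\log q\rceil+3$, as claimed. (Equivalently, one can first peel off the $+1$ via the splitting lemma, $\alpha(N)\le\alpha(N-1)+\alpha(2)=\alpha(N-1)+1$, and treat $N-1$ the same way, obtaining the same bound.) I do not anticipate a real obstacle: the only points needing a little care are verifying $N\ge 1$ so the block-count machinery is valid, checking that $q-1$ fits into $\lceil\log q\rceil$ bits (which holds since $2^{\lceil\log q\rceil}\ge q$), and dispatching the degenerate small-$q$ cases.
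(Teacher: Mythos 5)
Your proposal is correct and follows essentially the same route as the paper's proof: expand each $|a_j|$ into at most $\lceil\log q\rceil$ binary digits, view the whole quantity as a signed sum of at most $(q+1)\lceil\log q\rceil+2$ powers of two, and then chain \Cref{obs:num_blocks} with \Cref{simple_upper_bound}. Your extra checks (that the quantity is positive so the block-count observation applies, and the degenerate $q=1$ case) are sound and in fact slightly more careful than the paper's own write-up.
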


\begin{proof}
	The binary representation of $|a_j|$ has at most $\lceil \log{q} \rceil$ 
	non-zero bits. Therefore, $|a_j|2^{jq}$ has at most $\lceil \log{q} \rceil$ 
	non-zero bits in its binary representation. This means $\sum_{j=0}^{q} 
	a_j2^{jq}$ can be written as the sum or difference of $(q+1)\lceil \log{q} 
	\rceil$ powers of $2$. From our earlier results on block counts,  
	$2^{3q^2-1}-\sum_{j=0}^{q} a_j2^{jq} + 1$ has at most $(q+1)\lceil 
	\log{q} \rceil+2$ blocks. Therefore,
	$\alpha(2^{3q^2-1}-\sum_{j=0}^{q} a_j2^{jq} + 1) \leq (q+1)\lceil 
	\log{q} \rceil+3$
\end{proof}

\begin{corollary}\label{t2corol}
	$\alpha(t_2+1)\leq (q+1)\lceil \log{q} \rceil+3$    
\end{corollary}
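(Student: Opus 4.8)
The plan is to deduce \Cref{t2corol} directly from \Cref{pre_final_alpha_count} and \Cref{final_alpha_count}. The statement \Cref{pre_final_alpha_count} gives us that $t_2 = 2^{3q^2-1} - \sum_{j=0}^{q} a_j 2^{jq}$ for some integers $a_j$ with $-1 \le a_j \le q-1$ for $j \in [0,q-1]$ and $a_q = -(q-1)$. In particular, $|a_j| \le q-1$ for every $j \in [0,q]$, so these coefficients satisfy exactly the hypothesis of \Cref{final_alpha_count}. Applying that lemma with this choice of $(a_j)$ yields $\alpha(t_2 + 1) = \alpha\!\left(2^{3q^2-1} - \sum_{j=0}^{q} a_j 2^{jq} + 1\right) \le (q+1)\lceil \log q \rceil + 3$, which is precisely the claimed bound.

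Concretely, the proof is a one-line invocation: first cite \Cref{pre_final_alpha_count} to express $t_2$ in the prescribed form and to record that all $|a_j| \le q-1$; then substitute into \Cref{final_alpha_count}. Since \Cref{final_alpha_count} is quantified over \emph{all} coefficient tuples with $|a_j| \le q-1$, there is no additional case analysis or arithmetic to perform — the particular structure of the $a_j$ coming from the counts $b_j$ of sets in $\mathcal{F}_0$ is irrelevant beyond the size bound. I do not anticipate any real obstacle here; the genuine work was already discharged in proving \Cref{final_alpha_count} (bounding the number of blocks of $2^{3q^2-1} - \sum_j a_j 2^{jq} + 1$ via the block-count estimates from \Cref{obs:num_blocks} and \Cref{simple_upper_bound}) and in establishing the representation of $t_2$ in \Cref{pre_final_alpha_count}. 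Hence the corollary is immediate, and it feeds (together with \Cref{sumt1t2}, \Cref{mid_alpha_sets}, and the splitting lemma) into the final bound $\alpha(m) = O(q\log q)$ used to close the proof of \Cref{thm:basecase_recursion}.

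\begin{proof}
	By \Cref{pre_final_alpha_count}, $t_2 = 2^{3q^2-1} - \sum_{j=0}^{q} a_j 2^{jq}$ where $-1 \le a_j \le q-1$ for $j \in [0,q-1]$ and $a_q = -(q-1)$; in particular $|a_j| \le q-1$ for all $j \in [0,q]$. Applying \Cref{final_alpha_count} with these coefficients gives $\alpha(t_2+1) \le (q+1)\lceil \log q \rceil + 3$.
\end{proof}
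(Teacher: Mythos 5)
Your proof is correct and matches the paper's intended derivation exactly: the corollary is stated immediately after \Cref{pre_final_alpha_count} and \Cref{final_alpha_count} precisely so that it follows by substituting the representation of $t_2$ from the former into the latter, using that $|a_j|\le q-1$ for all $j\in[0,q]$. Nothing further is needed.
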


Combining all these results, \Cref{thm:basecase_recursion} follows from the 
splitting lemma along with \Cref{t2corol}, \Cref{mid_alpha_sets}, and 
\Cref{sumt1t2}.

\section{Conclusion and Open Problems}\label{sec:conclusion}

We have established that for every $k \geq 3$, there exists a DNF or CNF 
with exactly $k$ satisfying assignments using at most $O(\sqrt{\log k}
\log\log k)$ terms or clauses. Our construction provides the first 
$o(\log k)$ upper bound for this problem, significantly improving previous 
bounds \cite{Meel15}. The constructed DNFs also have the desirable property of being 
monotone, which simplifies their structure and analysis. On the other hand, we also give a lower bound showing that there exist infinitely many $k \in \mathbb{N}$ 
requiring at least $\Omega(\log\log k)$ terms or clauses. However, there 
remains a gap between our upper and lower bounds that presents an 
interesting avenue for future research.

We conjecture that the value of $\beta(k)$ is polynomial in 
$\log(bl(k) + 1)$, which would provide a more precise characterization 
of the relationship between the number of terms needed and the block 
structure of the number of solutions. Resolving this conjecture would 
further deepen our understanding of the structural properties of boolean 
formulas with a specific number of satisfying assignments.

The connection we established between this problem and the theory of ideals 
in set systems may also lead to further applications in other areas of 
combinatorics and computational complexity theory. In particular, the 
construction techniques we developed might be useful in addressing related 
questions about the expressiveness and succinctness of different 
representations of boolean functions.
\section*{Acknowledgments}
This research was funded in part by the Natural Sciences and Engineering Research Council
of Canada (NSERC), funding reference number RGPIN-2024-05956.

\appendix
\appendix
\section{Deferred proofs}
\subsection{Proof of \texorpdfstring{\cref{calc_der}}{Observation 24}}\label{proof_of_calc_der}

We claim that for every integer $3 \le k \le 2^{30000}$,
\[
\Bigl\lceil \tfrac12\log_2 k + 1 \Bigr\rceil
\;<\;
20\,\sqrt{\log_2 k}\,
      \log_2\!\bigl(\log_2 k\bigr).
\]

\begin{proof}
Put $x:=\log_2 k$.  Then $x\in[\log_2 3,\,30000]$.  The inequality
\(\lceil y\rceil<y+1\) gives
\[
\Bigl\lceil \tfrac12 x + 1 \Bigr\rceil
\;<\;
\tfrac12 x + 2,
\]
so it suffices to prove
\[
f(x):=\tfrac12 x + 2 - 20\sqrt{x}\,\log_2 x<0
\quad\text{for }x\in[\log_2 3,\,30000].
\]

\medskip
It is easy to see that
\[
f'(x)=\frac12-\frac{10\log_2 x}{\sqrt{x}}
      -\frac{20\log_2 e}{\sqrt{x}}, 
\qquad
f''(x)
      =\frac{5\,\log_2 x}{x^{3/2}}
      \;>\;0
      \quad(x>1).
\]
Hence \(f\) is convex on the entire interval. We can also compute that
\[
f(\log_2 3)\approx -13.8\;<\;0,
\qquad
f(30000)\approx -3.66\times 10^{4}\;<\;0.
\]

As both the end points of the convex function are negative, we know that the function is also negative everywhere between them. So, \(f(x)<0\) for all
\(x\in[\log_2 3,30000]\).
Substituting \(x=\log_2 k\) yields the claimed inequality.
\end{proof}

\subsection{Proof of \texorpdfstring{\cref{deflemcalc34}}{Lemma 34}}\label{obsalgman}
Observe that $$ \sum_{i=0}^{i=q-1} |2^{S_i'}| = q\cdot 2^{q^2}, \sum_{j=0}^{j=q-1} |2^{T_j'}| = \sum_{j=0}^{j=q-1} 2^{jk}$$
Moreover, since all sets in $\mathcal{S}'$ and $\mathcal{T}'$ are subsets of $[q^2]$
$$|\mathbf{ID}(\mathcal{S}'\cup \mathcal{T}')|=2^{q^2}$$
$$ \sum_{i,j \in [0,q-1]} (|2^{S_i'\bigcap T_j'}|) = \sum_{i,j \in [0,q-1]} (|2^{[jk]}|) =q\cdot \sum_{i=0}^{i=q-1} 2^{ik} $$
Substituting these values in \Cref{obs_fin_st}, we get $|\mathbf{ID}(\mathcal{S}\cup \mathcal{T})|$
$$ = \sum_{i=0}^{i=q-1} |2^{S_i}| + \sum_{j=0}^{j=q-1} |2^{T_j}| - \sum_{i,j \in [0,q-1]} (|2^{S_i\bigcap T_j}|)
-q\cdot 2^{q^2} - \sum_{i=0}^{i=q-1} 2^{ik} + q\cdot \sum_{i=0}^{i=q-1} 2^{ik} + 2^{q^2}$$

$$ = \sum_{i=0}^{i=q-1} |2^{S_i}| + \sum_{j=0}^{j=q-1} |2^{T_j}| - \sum_{i,j \in [0,q-1]} (|2^{S_i\bigcap T_j}|) +(q-1)(\sum_{j=0}^{j=q-1} 2^{jk} - 2^{q^2})$$
From \Cref{obs_fij_simp},

$$ = \sum_{i=0}^{i=q-1} |2^{S_i}| + \sum_{j=0}^{j=q-1} |2^{T_j}| - \sum_{i,j \in [0,q-1]} 2^{jk}(2^{|F_{ij}|}) +(q-1)(\sum_{j=0}^{j=q-1} 2^{jk} - 2^{q^2})$$

\end{document}